\pgfplotsset{compat=newest}
\theoremstyle{plain}
\newtheorem{thm}{Theorem}[section]
\theoremstyle{definition}
\newtheorem{defn}{Definition}[section]
\newcommand{\cf}{\hbox{\emph{cf.}}\xspace}
\newcommand{\eg}{\hbox{\emph{e.g.}}\xspace}
\newcommand{\ie}{\hbox{\emph{i.e.}}\xspace}
\newcommand{\wrt}{\hbox{\emph{w.r.t.}}\xspace}
\definecolor{light-gray}{gray}{0.7}
\newcommand{\highlight}[1]{\colorbox{light-gray}{#1}}
\newcommand{\mpise}{MPI-SV\xspace}
\newcommand{\lazy}{blocking-driven\xspace}
\newcommand{\LAZY}{Blocking-driven\xspace}
\newcommand{\issue}[1]{\mathit{issue}({#1})}
\newcommand{\enable}[1]{\mathit{enabled}({#1})}
\begin{document}
\copyrightyear{2020}
\acmYear{2020}
\setcopyright{acmcopyright}
\acmConference[ICSE '20]{ICSE '20: 42nd International Conference on Software Engineering }{May 23-29, 2018}{Seoul, South Korea}
\acmBooktitle{ICSE '20: ICSE '20: 42nd International Conference on Software Engineering , May 23-29, 2020, Seoul, South Korea}
\acmPrice{15.00}
\acmDOI{XX.XXXX/XXXXXXX.XXXXXXX}
\acmISBN{xxx-x-xxxx-xxxx-x/xx/xx}
\def\supplementary{supplementary}

\title{Combining Symbolic Execution and Model Checking to Verify MPI Programs}
\author{Hengbiao Yu\textsuperscript{1$*$}, Zhenbang Chen\textsuperscript{1$*$\authornote{The first two authors contributed equally to this work and are co-first authors. Zhenbang Chen and Ji Wang are the corresponding authors.}}, Xianjin Fu\textsuperscript{1,2}, Ji Wang\textsuperscript{1,2$*$}, Zhendong Su\textsuperscript{3},}
\author{
Jun Sun\textsuperscript{4}, Chun Huang\textsuperscript{1}, Wei Dong\textsuperscript{1}
}
\renewcommand{\authors}{Hengbiao Yu, Zhenbang Chen, Xianjin Fu, Ji Wang, Zhendong Su, Jun Sun, Chun Huang, and Wei Dong}
\renewcommand{\author}{Hengbiao Yu, Zhenbang Chen, Xianjin Fu, Ji Wang, Zhendong Su, Jun Sun, Chun Huang, and Wei Dong}
\affiliation{\textsuperscript{1}College of Computer, National University of Defense Technology, Changsha, China} 
\affiliation{\textsuperscript{2}State Key Laboratory of High Performance Computing, National University of Defense Technology, Changsha, China}
\affiliation{\textsuperscript{3}Department of Computer Science, ETH Zurich, Switzerland}
\affiliation{\textsuperscript{4}School of Information Systems, Singapore Management University, Singapore}
\email{{hengbiaoyu, zbchen, wj}@nudt.edu.cn, zhendong.su@inf.ethz.ch, junsun@smu.edu.sg, wdong@nudt.edu.cn}



\begin{abstract}
Message passing is the standard paradigm of programming in high-performance computing. However, verifying Message Passing Interface (MPI) programs is challenging, due to the complex program features (such as non-determinism and non-blocking operations).
In this work, we present MPI symbolic verifier (MPI-SV), the first symbolic execution based tool for automatically verifying MPI programs with non-blocking operations. MPI-SV combines symbolic execution and model checking in a synergistic way to tackle the challenges in MPI program verification. The synergy improves the scalability and enlarges the scope of verifiable properties.
We have implemented MPI-SV\footnote{MPI-SV is available  \url{https://mpi-sv.github.io}.} and evaluated it with 111 real-world MPI verification tasks. The pure symbolic execution-based technique successfully verifies 61 out of the 111 tasks (55\%) within one hour, while in comparison, MPI-SV verifies 100 tasks (90\%). On average, compared with pure symbolic execution, MPI-SV achieves 19x speedups on verifying the satisfaction of the critical property and 5x speedups on finding violations.
\end{abstract}

\begin{CCSXML}
<ccs2012>
<concept>
<concept_id>10011007.10011074.10011099</concept_id>
<concept_desc>Software and its engineering~Software verification and validation</concept_desc>
<concept_significance>500</concept_significance>
</concept>
</ccs2012>
\end{CCSXML}
\ccsdesc[500]{Software and its engineering~Software verification and validation}
\keywords{Symbolic Verification; Symbolic Execution; Model Checking; Message Passing Inteface; Synergy}	





\maketitle
\renewcommand{\shortauthors}{Hengbiao Yu, Zhenbang Chen, Xianjin Fu, Ji Wang, Zhendong Su, Jun Sun, Chun Huang, and Wei Dong}

\section{Introduction}

Nowadays, an increasing number of high-performance computing (HPC) applications have been developed to solve large-scale problems~\citep{buyya1999high}. The Message Passing Interface (MPI)~\citep{snir1998mpi} is the current \emph{de facto} standard programming paradigm for developing HPC applications. Many MPI programs are developed with significant human effort. One of the reasons is that MPI programs are \emph{error-prone} because of complex program features (such as \emph{non-determinism} and \emph{asynchrony}) and their scale. Improving the reliability of MPI programs is challenging ~\citep{DBLP:journals/cacm/GopalakrishnanKSTGLSSB11,HPCSummit}.

Program analysis~\citep{nielson2015principles} is an effective technique for improving program reliability. Existing methods for analyzing MPI programs can be categorized into \emph{dynamic} and \emph{static} approaches. Most existing methods are dynamic, such as debugging~\citep{DBLP:journals/cacm/LagunaASGLSBKZC15}, correctness checking~\citep{DBLP:conf/parco/SamofalovKKZKD05} and dynamic verification~\citep{DBLP:conf/cav/VakkalankaGK08}. These methods need concrete inputs to run MPI programs and perform analysis based on runtime information. Hence, dynamic approaches may miss input-related program errors. Static approaches~\citep{DBLP:conf/pvm/Siegel07,DBLP:conf/oopsla/LopezMMNSVY15,DBLP:conf/cgo/Bronevetsky09,DBLP:conf/vmcai/BotbolCG17} analyze abstract models of MPI programs and suffer from false alarms, manual effort, and poor scalability. To the best of our knowledge, existing \emph{automated verification} approaches for MPI programs either do not support \emph{input-related} analysis or fail to support the analysis of the MPI programs with 
\emph{non-blocking} operations, the invocations of which do not block the program execution. Non-blocking operations are ubiquitous in real-world MPI programs for improving the performance but introduce more complexity to programming. 

Symbolic execution~\citep{king1976symbolic,godefroid2005dart} supports input-related analysis by systematically exploring a program's path space. In principle, symbolic execution provides a balance between concrete execution and static abstraction with improved input coverage or more precise program abstraction. However, symbolic execution based analyses suffer from path explosion due to the exponential increase of program paths \wrt the number of conditional statements. The problem is particularly severe when analyzing MPI programs because of parallel execution and non-deterministic operations. Existing symbolic execution based verification approaches \citep{DBLP:journals/mics/SiegelZ11}\citep{DBLP:conf/hase/FuCZHDW15} do not support non-blocking MPI operations. 

In this work, we present \mpise, a novel verifier for MPI programs by smartly integrating symbolic execution and model checking. As far as we know, \mpise\ is the first automated verifier that supports non-blocking MPI programs and LTL \citep{DBLP:books/daglib/0077033} property verification. \mpise uses symbolic execution to extract \emph{path-level} models from MPI programs and verifies the models \wrt the expected properties by model checking~\citep{clarke1999model}. The two techniques complement each other: (1) symbolic execution abstracts the control and data dependencies to generate verifiable models for model checking, and (2) model checking improves the scalability of symbolic execution by leveraging the verification results to prune redundant paths and enlarges the scope of verifiable properties of symbolic execution.

In particular, MPI-SV combines two algorithms: (1) symbolic execution of \emph{non-blocking} MPI programs with \emph{non-deterministic} operations, and (2) modeling and checking the behaviors of an MPI program path precisely. 
To safely handle non-deterministic operations, the first algorithm delays the message matchings of non-deterministic operations as much as possible. 
The second algorithm extracts a model from an MPI program path. 
The model represents all the path's equivalent behaviors, \ie, the paths generated by changing the interleavings and matchings of the communication operations in the path. We have proved that our modeling algorithm is precise and consistent with the MPI standard~\cite{MPI}. We feed the generated models from the second algorithm into a model checker to perform verification \wrt the expected properties, \ie, \emph{safety} and \emph{liveness} properties in linear temporal logic (LTL) \citep{DBLP:books/daglib/0077033}. If the extracted model from a path $p$ satisfies the property $\varphi$, $p$'s equivalent paths can be safely pruned; otherwise, if the model checker reports a counterexample, a violation of $\varphi$ is found. This way, we significantly boost the performance of symbolic execution by pruning a large set of paths which are equivalent to certain paths that have been already model-checked.

We have implemented \mpise\ for MPI C programs based on Cloud9~\citep{bucur2011parallel} and PAT~\citep{sun2009pat}. We have used \mpise\ to analyze 12 real-world MPI programs, totaling 47K lines of code (LOC) (three are beyond the scale that the state-of-the-art MPI verification tools can handle), \wrt the deadlock freedom property and \emph{non-reachability} properties. For the 111 deadlock freedom verification tasks, when we set the time threshold to be an hour, \mpise\ can complete 100 tasks, \emph{i.e.}, deadlock reported or deadlock freedom verified, while pure symbolic execution can complete 61 tasks. For the 100 completed tasks, \mpise\ achieves, on average, 19x speedups on verifying deadlock freedom and 5x speedups on finding a deadlock. 

The main contributions of this work are:
\begin{itemize}[leftmargin=1em]
\setlength{\itemsep}{0pt}\setlength{\parsep}{0pt}\setlength{\parskip}{0pt}
\item A synergistic framework combining symbolic execution and model checking for verifying MPI programs.
\item A method for symbolic execution of non-blocking MPI programs with non-deterministic operations. The method is formally proven to preserve the correctness of verifying reachability properties.
\item A precise method for modeling the equivalent behaviors of an MPI path, 
which enlarges the scope of the verifiable properties and improves the scalability.
\item A tool for symbolic verification of MPI C programs and an extensive evaluation on real-world MPI programs.
\end{itemize}

\section{Illustration}\label{motivation}

In this section, we first introduce MPI programs and use an example to illustrate the problem that this work targets. Then, we overview \mpise\ informally by the example.

\subsection{MPI Syntax and Motivating Example}\label{sec:2-1}

MPI implementations, such as MPICH~\citep{gropp2002mpich2} and OpenMPI~\citep{gabriel2004open}, provide the programming interfaces of message passing to support the development of parallel applications. An MPI program can be implemented in different languages, such as C and C++. Without loss of generality, we focus on MPI programs written in C.
Let $\mathbb{T}$ be a set of types, $\mathbb{N}$ a set of names, and $\mathbb{E}$ a set of expressions.
For simplifying the discussion, we define a core language for MPI processes in Figure \ref{fig:mpisyntax},  where $\mathbf{T} \in \mathbb{T}$, $\verb"r" \in \mathbb{N}$, and $\verb"e" \in \mathbb{E}$.
An MPI program $\mathcal{MP}$ is defined by a \emph{finite} set of processes $\{\textsf{Proc}_i \mid 0 \le i \le n \}$. 
\emph{For brevity, we omit complex language features (such as the messages in the communication operations and pointer operations) although  \mpise does support real-world MPI C programs.}

\begin{figure}[!t]
\begin{center}{
\small
\begin{tabular}{rll}
$\textsf{Proc}$ & $ ::=$ & $\mathbf{var}\ \verb"r":\mathbf{T} \mid \verb"r" := \verb"e" \mid \textsf{Comm} \mid \textsf{Proc}\ ;\ \textsf{Proc}\mid $\\
& &$ \mathbf{if}\ \verb"e" \ \textsf{Proc}\ \mathbf{else}\ \textsf{Proc} \mid \mathbf{while}\ \verb"e"\ \mathbf{do}\ \textsf{Proc}$\\
$\textsf{Comm}$ & $::=$ & $\verb"Ssend(e)" \mid \verb"Send(e)" \mid \verb"Recv(e)" \mid \verb"Recv(*)" \mid \verb"Barrier" \mid $\\
&& $\verb"ISend(e,r)" \mid \verb"IRecv(e,r)" \mid \verb"IRecv(*,r)" \mid \verb"Wait(r)"$
\end{tabular}}
\end{center}
\caption{Syntax of a core MPI language.}\label{fig:mpisyntax}
\end{figure}

The statement $\mathbf{var}\ \verb"r":\mathbf{T}$ declares a variable $\verb"r"$ with type $\mathbf{T}$. The statement $\verb"r" := \verb"e"$ assigns the value of expression $\verb"e"$ to variable $\verb"r"$. A process can be constructed from basic statements by using the composition operations including sequence, condition and loop.
For brevity, we incorporate the key message passing operations in the syntax, where $\verb"e"$ indicates the destination process's identifier. These message passing operations can be \emph{blocking} or \emph{non-blocking}. First, we introduce blocking operations:
\begin{itemize}[leftmargin=1em]
\item
\verb"Ssend(e)": sends a message to the \emph{e}th process,
and the sending process blocks until the message is received by the destination process.
\item
\verb"Send(e)": sends a message to the \emph{e}th process,
and the sending process blocks until the message is copied into the system buffer.
\item
\verb"Recv(e)": receives a message from the \emph{e}th process, and
the receiving process blocks until the message from the \emph{e}th process is received.
\item
\verb"Recv(*)": receives a message from \emph{any} process,
and the receiving process blocks until a message is received regardless which process sends the message.
\item
\verb"Barrier": blocks the process until all the processes have called \verb"Barrier".
\item
\verb"Wait(r)": the process blocks until the operation indicated by \verb"r" is completed.
\end{itemize}

A \verb"Recv(*)" operation, called \emph{wildcard receive}, may receive a message from different processes under different runs, resulting in non-determinism. The blocking of a \verb"Send(e)" operation depends on the size of the system buffer, which may differ under different MPI implementations. 
For simplicity, we assume that the size of the system buffer is infinite. Hence, each \verb"Send(e)" operation returns \emph{immediately} after being issued. Note that our implementation allows users to configure the buffer size. 
To improve the performance, the MPI standard provides non-blocking operations to overlap computations and communications.
\begin{itemize}[leftmargin=1em]
\item
\verb"ISend(e,r)": sends a message to the \emph{e}th process, and the operation returns immediately after being issued. The parameter \verb"r" is the handle of the operation.
\item
\verb"IRecv(e,r)": receives a message from the \emph{e}th process, and the operation returns immediately after being issued. \verb"IRecv(*,r)" is the non-blocking wildcard receive.
\end{itemize}

\noindent
The operations above are key MPI operations. Complex operations, such as \verb"MPI_Bcast" and \verb"MPI_Gather", can be implemented by composing these key operations. The formal semantics of the core language is defined based on communicating state machines (CSM) \cite{brand1983communicating}. We define each process as a CSM with an unbounded receiving FIFO queue.
For the sake of space limit, the formal semantics can be referred to \cite{mpisv-arxiv}.

An MPI program runs in many processes spanned across multiple machines. These processes communicate by message passing to accomplish a parallel task. Besides parallel execution, the non-determinism in MPI programs mainly comes from two sources: (1) inputs, which may influence the communication through control flow, and (2) wildcard receives, which lead to highly non-deterministic executions.
\begin{figure}
\small
\begin{center}
{\begin{tabular}{l|l|l|l}
\hline
$P_0$&$P_1$&$P_2$&$P_3$\\
\hline
\verb"Send(1)"&\textbf{if} ($x$ != `a') & \verb"Send(1)" & \verb"Send(1)"\\
& \ \ \verb"Recv(0)"& &\\
& \textbf{else}& &\\
& \ \ \verb"IRecv(*,req)";& &\\
& \verb"Recv(3)"& &\\
\hline
\end{tabular}}
\end{center}
\caption{An illustrative example of MPI programs.}
\label{fig:example}
\end{figure}

Consider the MPI program in Figure~\ref{fig:example}. Processes $P_0$, $P_2$ and $P_3$ only send a message to $P_1$ and then terminate. For process $P_1$, if input $x$ is \emph{not} equal to `a', $P_1$ receives a message from $P_0$ in a blocking manner; otherwise, $P_1$ uses a non-blocking wildcard receive to receive a message. Then, $P_1$ receives a message from $P_3$. When $x$ is `a' and \verb"IRecv(*,req)" receives the message from $P_3$, a \emph{deadlock} occurs, \ie, $P_1$ blocks at \verb"Recv(3)", and all the other processes terminate. Hence, to detect the deadlock, we need to handle the non-determinism caused by the input $x$ and the wildcard receive \verb"IRecv(*,req)".

To handle non-determinism due to the input, 
a standard remedy is symbolic execution~\citep{king1976symbolic}. However, there are two challenges. The first one is to \emph{systematically explore the paths of an MPI program with non-blocking and wildcard operations}, which significantly increase the complexity of MPI programs. A non-blocking operation does not block but returns immediately, causing out-of-order completion. The difficulty in handling wildcard operations is to get all the possibly matched messages. The second one is to \emph{improve the scalability of the symbolic execution.} Symbolic execution struggles with path explosion. 
MPI processes run concurrently, resulting in an exponential number of program paths \wrt the number of processes. Furthermore, the path space increases exponentially with the number of wildcard operations.



\subsection{Our Approach}

\mpise\ leverages dynamic verification~\citep{DBLP:conf/cav/VakkalankaGK08} and model checking~\citep{clarke1999model} to tackle the challenges. Figure~\ref{fig:framework} shows \mpise's basic framework.
The inputs of \mpise\ are an MPI program and an expected property, \emph{e.g.}, \emph{deadlock freedom} expressed in LTL. \mpise\ uses the built-in symbolic executor to explore the path space automatically and checks the property along with path exploration. For a path that violates the property, called a \emph{violation path}, \mpise\ generates a test case for replaying, which includes the program inputs, the interleaving sequence of MPI operations and the matchings of wildcard receives. In contrast, for a \emph{violation-free} path $p$, \mpise\ builds a communicating sequential process (CSP) model $\Gamma$, which represents 
the paths which can be obtained based on $p$ by changing the interleavings and matchings of the communication operations in $p$. Then, \mpise\ utilizes a CSP model checker to verify $\Gamma$ \wrt the property. If the model checker reports a counterexample, a violation is found; otherwise, if $\Gamma$ satisfies the property, \mpise\ prunes 
all behaviors captured by the model so that they are avoided by symbolic execution.
\begin{figure}[!t]
\includegraphics[width=3.3in]{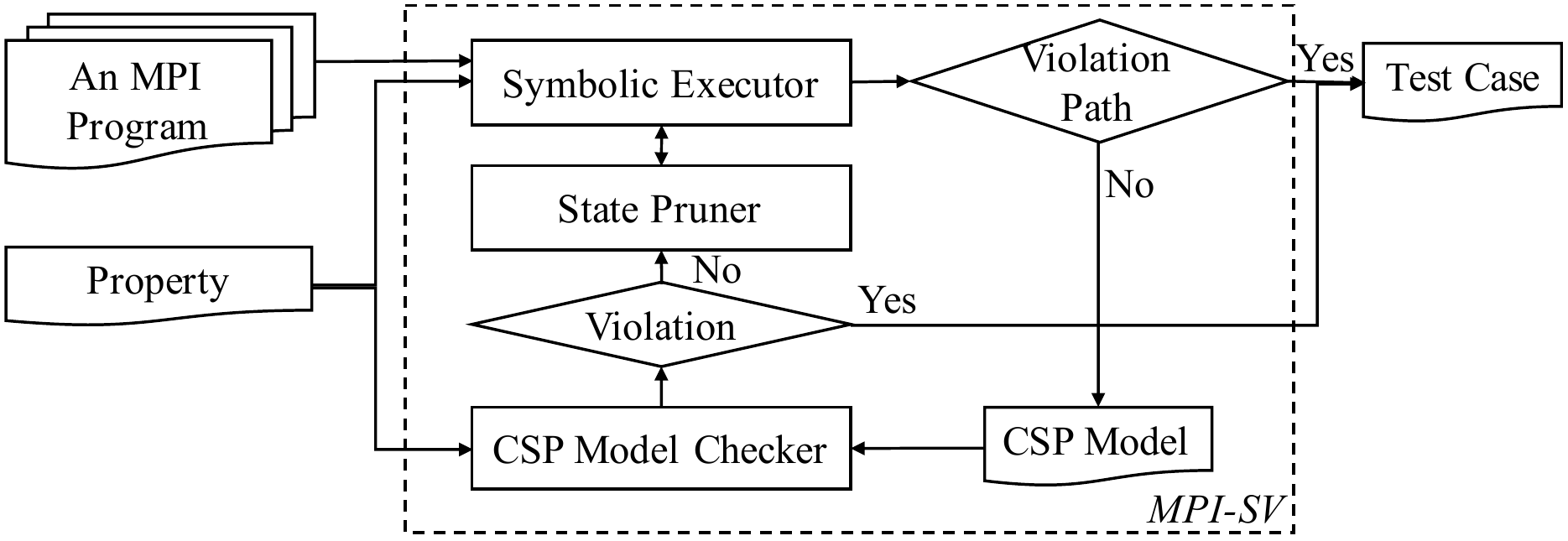}
\caption{The framework of \mpise.}\label{fig:framework}
\end{figure}


Since MPI processes are memory independent, \mpise\ will select a process to execute in a \emph{round-robin} manner to avoid exploring all interleavings of the processes. A process keeps running until it \emph{blocks} or \emph{terminates}. When encountering an MPI operation, \mpise\ records the operation instead of executing it and doing the message matching. When every process blocks or terminates and at least one blocked process exists, \mpise\ matches the recorded MPI operations of the processes \wrt the MPI standard \cite{MPI}.
The intuition behind this strategy is to collect the message exchanges as thoroughly as possible, which helps find possible matchings for the wildcard receive operations. Consider the MPI program in Figure~\ref{fig:example} and the \emph{deadlock freedom} property. Figure \ref{fig:tree} shows the symbolic execution tree, where the node labels indicate process communications, \eg, $(3, 1)$ means that $P_1$ receives a message from $P_3$.
\mpise\ first symbolically executes $P_0$, which only sends a message to $P_1$.
The \verb"Send(1)" operation returns immediately with the assumption of infinite system buffers. Hence, $P_0$ terminates, and the operation \verb"Send(1)" is recorded. Then, \mpise\ executes $P_1$ and explores both branches of the conditional statement as follows.
\begin{figure}[!b]
\includegraphics[width=1.8in]{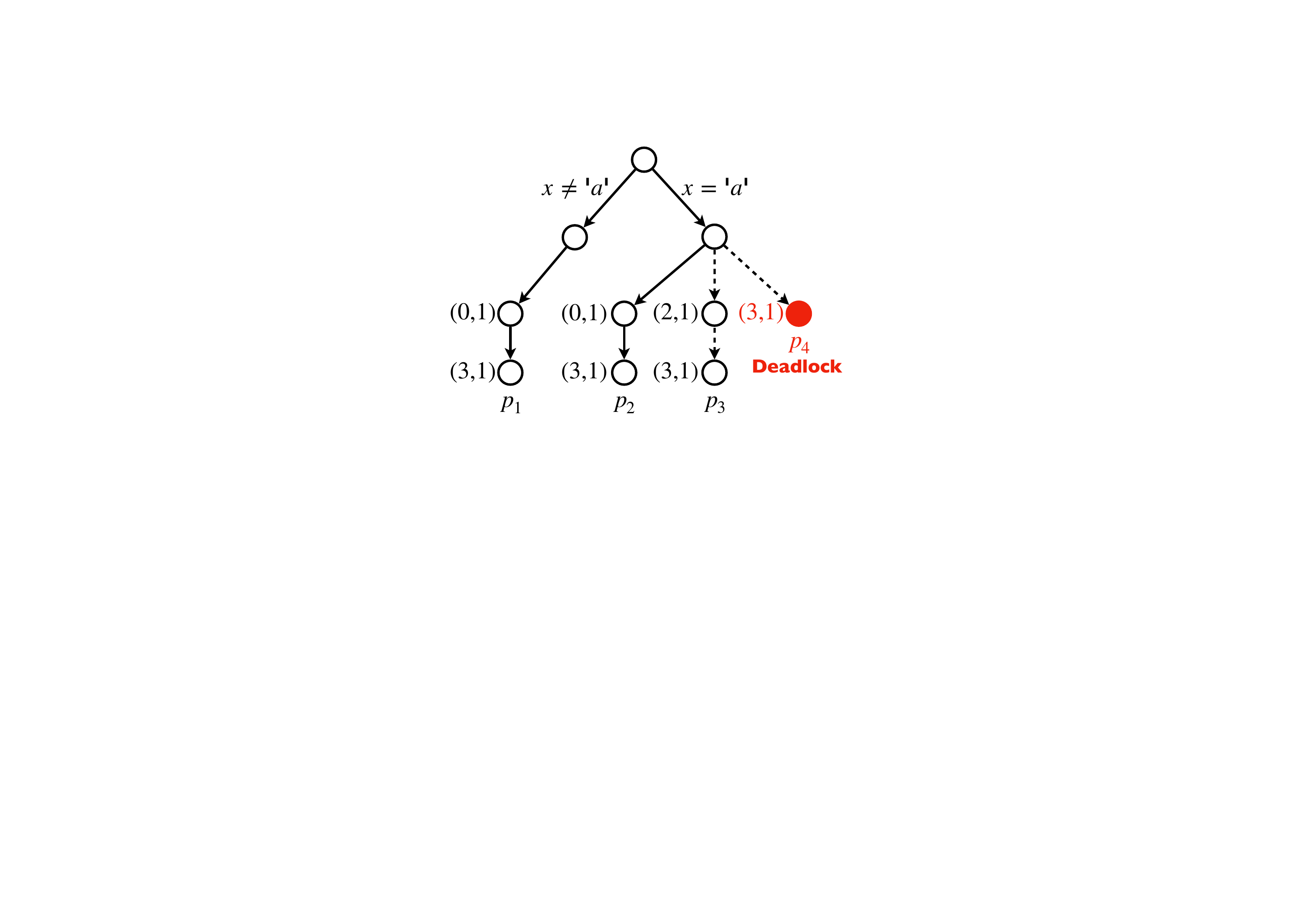}
\caption{The example program's symbolic execution tree.}\label{fig:tree}
\end{figure}

\textbf{(1) True branch ($\mathbf{x} \neq $  `a').}\quad In this case, $P_1$ blocks at \verb"Recv(0)". \mbox{\mpise}\ records the receive operation for $P_1$, and starts executing $P_2$. Like $P_0$, $P_2$ executes operation \verb"Send(1)" and terminates, after which $P_3$ is selected and behaves the same as $P_2$. After $P_3$ terminates, the global execution blocks, \ie, $P_1$ blocks and all the other processes terminate. When this  happens, \mpise\ matches the recorded operations, performs the message exchanges and continues to execute the matched processes. The \verb"Recv(0)" in $P_1$ should be matched with the \verb"Send(1)" in $P_0$. After executing the send and receive operations, \mpise\ selects $P_1$ to execute, because $P_0$ terminates. Then, $P_1$ blocks at \verb"Recv(3)". Same as earlier, the global execution blocks and operation matching needs to be done. \verb"Recv(3)" is matched with the \verb"Send(1)" in $P_3$. After executing the \verb"Recv(3)" and \verb"Send(1)" operations, all the processes terminate successfully. Path $p_1$ in Figure \ref{fig:tree} is explored.

\textbf{(2) False branch ($\mathbf{x}$ {=}`a').}\quad The execution of $P_1$ proceeds until reaching the blocking receive \verb"Recv(3)". Additionally, the two issued receive operations, \ie, \verb"IRecv(*,req)" and \verb"Recv(3)", are recorded. Similar to the true branch, when every process blocks or terminates, we handle operation matching. Here $P_0$, $P_2$ and $P_3$ terminate, and $P_1$ blocks at \verb"Recv(3)".
\verb"IRecv(*,req)" should be matched first because of the \emph{non-overtaken} policy in the MPI standard~\cite{MPI}. There are three \verb"Send" operation candidates from $P_0$, $P_2$ and $P_3$, respectively. \mpise\ forks a state for each candidate. Suppose \mpise\ first explores the state where \verb"IRecv(*,req)" is matched with $P_0$'s \verb"Send(1)".
After matching and executing $P_1$'s \verb"Recv(3)" and $P_3$'s \verb"Send(1)", the path terminates successfully, which generates path $p_2$ in Figure \ref{fig:tree}.

\textbf{\emph{Violation detection}.}\quad \mpise\ continues to explore the remaining two cases. Without \hbox{CSP-based} boosting, the deadlock would be found in the last case (\ie, $p_4$ in Figure \ref{fig:tree}), where \verb"IRecv(*,req)" is matched with $P_3$'s \verb"Send(1)" and $P_1$ blocks because \verb"Recv(3)" has no matched operation. \mpise\ generates a CSP model $\Gamma$ based on the deadlock-free path $p_2$ where $P_1$'s \verb"IRecv(*,req)" is matched with $P_0$'s \verb"Send(1)". Each MPI process is modeled as a CSP process, and all the CSP processes are composed in parallel to form $\Gamma$. Notably, in $\Gamma$, we collect the possible matchings of a wildcard receive through statically matching the arguments of operations in the path. Additionally, the requirements in the MPI standard, \emph{i.e.}, completes-before relations~\citep{DBLP:conf/cav/VakkalankaGK08}, are also modeled. A CSP model checker then verifies deadlock freedom for $\Gamma$. The model checker reports a counterexample where \verb"IRecv(*,req)" is matched with the \verb"Send(1)" in $P_3$. \mpise\ only explores \emph{two} paths for detecting the deadlock and avoids the exploration of $p_3$ and $p_4$ (indicated by dashed lines).

\textbf{\emph{Pruning}.}\quad Because the CSP modeling is precise (\emph{cf.} Section \ref{boosting}), in addition to finding violations earlier, \mpise\ can also perform path pruning when the model satisfies the property. Suppose we change the program in \mbox{Figure \ref{fig:example}} to be the one where the last statement of $P_1$ is a \verb"Recv(*)" operation. Then, the program is \emph{deadlock free}. The true branch ($x \neq$ `a') has 2 paths, because the last wildcard receive in $P_1$ has two matchings (\ie, $P_2$'s send and $P_3$'s send, and $P_0$'s send has been matched by $P_1$'s \verb"Recv(0)"). The false branch ($x =$ `a') has 6 paths because the first wildcard receive has 3 matchings (send operations from $P_0$, $P_2$ and $P_3$) and the last wildcard receive has 2 matchings (because the first wildcard receive has matched one send operation).  Hence, in total, there are 8 paths (\ie, $2 + 3*2 = 8$) if we use pure symbolic execution. In contrast, with model checking, \mpise\ only needs 2 paths to verify that the program is deadlock-free. For each branch, the generated model is verified to be deadlock-free, so \mpise\ prunes the candidate states forked for the matchings of the wildcard receives.


\textbf{\emph{Properties}.}\quad Because our CSP modeling encodes the interleavings of the MPI operations in the MPI processes, the scope of the verifiable properties is enlarged, \ie, \mpise\ can verify safety and liveness properties in LTL. Suppose we change the property to be the one that requires the \verb"Send(1)" operation in $P_0$ should be completed before the \verb"Send(1)" operation in $P_2$. Actually, the send operation in $P_2$ can be completed before the send operation in $P_0$, due to the nature of parallel execution. However, \emph{pure} symbolic execution fails to detect the property violation. In contrast, with the help of CSP modeling, when we verify the model generated from the first path \wrt the property, the model checker gives a counterexample, indicating that 
a violation of the property exists. 

\section{Symbolic Verification Method }\label{sec3}

In this section, we present our symbolic verification framework and then describe \mpise's symbolic execution method.

\subsection{Framework}

Given an MPI program $\mathcal{MP} = \{\textsf{Proc}_i \mid 0 \le i \le n \}$, a state $S_c$ in $\mathcal{MP}$'s symbolic execution is composed by the states of processes, \ie, $(s_0, ..., s_n)$, and each MPI process's state is a 6-tuple $(\mathcal{M}, \mathit{Stat}, \mathit{PC}, \mathcal{F},\mathcal{B},\mathcal{R})$, where $\mathcal{M}$ maps each variable to a concrete value or a symbolic value, $\mathit{Stat}$ is the next program statement to execute, $\mathit{PC}$ is the process's path constraint \cite{king1976symbolic}, $\mathcal{F}$ is the flag of process status belonging to $\{\mathsf{active},\mathsf{blocked},\mathsf{terminated}\}$, $\mathcal{B}$ and $\mathcal{R}$ are infinite buffers for storing the issued MPI operations not yet matched and the matched MPI operations, respectively. We use $s_i \in S_c$ to denote that $s_i$ is a process state in the global state $S_c$. An element $elem$ of $s_i$ can be accessed by $s_i.elem$, \eg, $s_i.\mathcal{F}$ is the $i$th process's status flag. In principle, a statement execution in any process advances the global state, making $\mathcal{MP}$'s state space exponential to the number of processes. We use variable $Seq_i$ defined in $\mathcal{M}$ to record the sequence of the issued MPI operations in $\textsf{Proc}_i$, and $\mathsf{Seq}(S_c)$ to denote the set $\{Seq_i \mid 0 \le i \le n\}$ of global state $S_c$. Global state $S_c$'s path condition (denoted by $S_c.PC$) is the conjunction of the path conditions of $S_c$'s processes, \emph{i.e.}, $\bigwedge\limits_{s_i \in S_c} s_i.PC$.

\begin{spacing}{1}
{\SetAlgoNoLine
\begin{algorithm}[!t]
\setstretch{0.85}
\caption{Symbolic Verification Framework}
\label{alg:se_top}
\LinesNumbered
\DontPrintSemicolon
$\textsf{\mpise}(\mathcal{MP}, \varphi, \textsf{Sym})$\\
\KwData{$\mathcal{MP}$ is $\{\textsf{Proc}_i \mid 0 \le i \le n \}$, $\varphi$ is a property, and $\textsf{Sym}$ is a set of symbolic variables}
\Begin{
$\mathit{worklist} \leftarrow \{S_{init}\}$\\
\While{$\mathit{worklist} \neq \emptyset$}{ \label{while}
$S_c \leftarrow \textsf{Select}(\mathit{worklist})$\label{searcher}\\
$(\mathcal{M}_i, Stat_i, PC_i, \mathcal{F}_i,\mathcal{B}_i,\mathcal{R}_i) \leftarrow \textsf{Scheduler}(S_c)$\\
$\textsf{Execute}(S_c, \textsf{Proc}_i, Stat_i, \textsf{Sym}, \mathit{worklist})$\label{SExe}\\
\If{$\forall s_i \in S_c, s_i.\mathcal{F}= \mathsf{terminated}$} {
$\Gamma \leftarrow \textsf{GenerateCSP}(S_c)$\label{generate_model} \\
$\textsf{ModelCheck}(\Gamma,\varphi)$\label{MC}\\
\If{$\Gamma\models \varphi$}{
$\mathit{worklist} {\leftarrow} \mathit{worklist} {\setminus} \{ S_p {\in} \mathit{worklist} {\mid} S_p.PC {\Rightarrow} S_c.PC\}$\label{prune}
}\ElseIf{$\Gamma \not\models \varphi$} {
$\textsf{reportViolation}$ and \textbf{Exit}\label{report}
}
}
}
}
\end{algorithm}}
\end{spacing}

\mbox{Algorithm \ref{alg:se_top}} shows the details of \mpise.
We use $\mathit{worklist}$ to store the global states to be explored. Initially, $\mathit{worklist}$ only contains $S_{init}$, composed of the initial states of all the processes, and each process's status is $\mathsf{active}$. At Line \ref{searcher}, \textsf{Select} picks a state from $\mathit{worklist}$ as the one to advance. Hence, \textsf{Select} can be customized with different search heuristics, \eg, depth-first search (DFS). Then, \textsf{Scheduler} selects an active process $\textsf{Proc}_i$ to execute. Next, \textsf{Execute} (\cf Algorithm~\ref{alg:se_statement}) symbolically executes the statement $Stat_i$ in $\textsf{Proc}_i$, and may add new states into $\mathit{worklist}$. This procedure continues until $\mathit{worklist}$ is empty (\ie, all the paths have been explored), detecting a violation or time out (omitted for brevity). After executing $Stat_i$, if all the processes in the current global state $S_c$ terminate, \emph{i.e.}, a violation-free path terminates, we use Algorithm \ref{alg:Modeling} to generate a CSP model $\Gamma$ from the current state (Line~\ref{generate_model}). Then, we use a CSP model checker to verify $\Gamma$ \wrt $\varphi$. If $\Gamma$ satisfies $\varphi$ (denoted by $\Gamma\models \varphi$), we prune the global states forked by the wildcard operations along the current path (Line \ref{prune}), \emph{i.e.}, the states in $\mathit{worklist}$ whose path conditions imply $S_c$'s path condition; otherwise, if the model checker gives a counterexample, we report the violation and exit (Line \ref{report}).

Since MPI processes are memory independent, we employ partial order reduction (POR)~\citep{clarke1999model} to reduce the search space. \textsf{Scheduler} selects a process in a \emph{round-robin} fashion from the current global state. In principle, \textsf{Scheduler} starts from the active MPI process with the smallest identifier, \emph{e.g.}, $\textsf{Proc}_0$ at the beginning, and an MPI process keeps running until it is blocked or terminated. Then, the next active process will be selected to execute. Such a strategy significantly reduces the path space of symbolic execution.
Then, with the help of CSP modeling and model checking, \mpise\ can verify more properties, \ie, safety and liveness properties in LTL. The details of such technical improvements will be given in Section~\ref{boosting}.

\subsection{\LAZY\ Symbolic Execution}

Algorithm~\ref{alg:se_statement} shows the symbolic execution of a statement.
Common statements such as conditional statements are
handled in the standard way~\citep{king1976symbolic} (omitted for brevity), and here we focus on MPI operations.
{\SetAlgoNoLine
\begin{algorithm}[!t]
\setstretch{0.85}
\caption{Blocking-driven Symbolic Execution}
\label{alg:se_statement}
\LinesNumbered
\DontPrintSemicolon
$\textsf{Execute}(S_c, \textsf{Proc}_i, Stat_i, \textsf{Sym}, worklist)$\\
\KwData{Global state $S_c$, MPI process $\textsf{Proc}_i$, Statement $Stat_i$, Symbolic variable set $\textsf{Sym}$, $\mathit{worklist}$ of global states }
\Begin{
\Switch{$(Stat_i)$}{
\Case{$\emph{\texttt{Send}}\ or\ \emph{\texttt{ISend}}\ or\ \emph{\texttt{IRecv}}$}{
$Seq_i \leftarrow Seq_i\cdot \langle Stat_i\rangle$\label{record1}\\
$s_i.\mathcal{B} \leftarrow s_i.\mathcal{B}\cdot \langle Stat_i\rangle$\label{buf1}
}
\Case{$\emph{\texttt{Barrier}}\ or\ \emph{\texttt{Wait}}\ or\ \emph{\texttt{Ssend}}\ or\ \emph{\texttt{Recv}}$}{
$Seq_i \leftarrow Seq_i\cdot \langle Stat_i\rangle$\label{record2} \\
$s_i.\mathcal{B} \leftarrow s_i.\mathcal{B}\cdot \langle Stat_i\rangle$\label{buf2} \\
$s_i.\mathcal{F} \leftarrow \mathsf{blocked}$\label{blocks}\\
\If{\emph{\textsf{GlobalBlocking}}}{ \label{LMC} \tcp*{{\footnotesize$\forall s_i \in S_c, (s_i.\mathcal{F}=\mathsf{blocked} \vee s_i.\mathcal{F} =\mathsf{terminated})$}}
$\textsf{Matching}(S_c, worklist)$\label{LM}
}
}
\textbf{default:} $\textsf{Execute}(S_c, \textsf{Proc}_i, Stat_i, \textsf{Sym}, worklist)\ as\ normal$
}
}
\end{algorithm}
}
The main idea is to \emph{delay} the executions of MPI operations \emph{as much as possible}, \emph{i.e.}, trying to get all the message matchings. Instead of execution, \mbox{Algorithm~\ref{alg:se_statement}} records each MPI operation for each MPI process (Lines~\ref{record1}\&\ref{record2}). We also need to update buffer $\mathcal{B}$ after issuing an MPI operation (Lines~\ref{buf1}\&\ref{buf2}).
Then, if $Stat_i$ is a non-blocking operation, the execution returns immediately; otherwise, we block $\textsf{Proc}_i$ (Line \ref{blocks}, excepting the \verb"Wait" of an \verb"ISend" operation). When reaching $\textsf{GlobalBlocking}$ (Lines~\ref{LMC}\&\ref{LM}), \emph{i.e.}, every process is terminated or blocked, we use $\textsf{Matching}$ (\cf Algorithm \ref{alg:matching}) to match the recorded but not yet matched MPI operations and execute the matched operations. Since the opportunity of matching messages is $\textsf{GlobalBlocking}$, we call it \lazy\ symbolic execution.

{\SetAlgoNoLine
\begin{algorithm}[!b]
\setstretch{0.85}
\caption{\LAZY\ Matching}
\label{alg:matching}
\LinesNumbered
\DontPrintSemicolon
$\textsf{Matching}(S_c, \mathit{worklist})$\\
\KwData{Global state $S_c$, $\mathit{worklist}$ of global states}
\Begin{
$MS_W \leftarrow \emptyset$\tcp*{Matching set of wildcard operations}
$\mathit{pair}_n \leftarrow \textsf{match}_{N}(S_c)$\label{matchN}\tcp*{Match non-wildcard operations}
\If{$\mathit{pair}_n \neq \mathit{empty\ pair}$}{
$\textsf{Fire}(S_c, pair_n )$\label{fire}
}\Else{
$MS_W \leftarrow \textsf{match}_{W}(S_c)$\label{matchW}\tcp*{Match wildcard operations}
\For{$\mathit{pair}_w \in MS_{W}$}{ \label{updateS}
$S_c'\leftarrow \textsf{fork}(S_c, \mathit{pair}_w)$\label{fork}\\
$\mathit{worklist} \leftarrow \mathit{worklist} \cup \{S_c'\}$\label{forkE}
}
\If{$MS_W \neq \emptyset$}{
$\mathit{worklist} \leftarrow \mathit{worklist} \setminus \{S_c\}$
}\label{updateE}
}
\If{$\mathit{pair}_n = \mathit{empty\ pair} \wedge MS_W = \emptyset$}{\label{deadlock}
$\textsf{reportDeadlock}$ and \textbf{Exit}\label{exit}
}
}
\end{algorithm}
}

$\textsf{Matching}$ matches the recorded MPI operations in different processes.
To obtain all the possible matchings, we delay the matching of a wildcard operation \emph{as much as possible}. We use $\textsf{match}_{N}$ to match the non-wildcard operations first (Line \ref{matchN}) \wrt the rules in the MPI standard \cite{MPI}, especially the \emph{non-overtaken} ones: (1) if two sends of a process send messages to the same destination, and both can match the same receive, the receive should match the first one; and (2) if a process has two receives, and both can match a send, the first receive should match the send. The matched send and receive operations will be executed, and the statuses of the involved processes will be updated to \textsf{active}, denoted by $\textsf{Fire}(S_c, \mathit{pair}_n)$ (Line \ref{fire}). If there is no matching for non-wildcard operations, we use $\textsf{match}_W$ to match the wildcard operations (Line~\ref{matchW}). For each possible matching of a wildcard receive, we fork a new state (denoted by $\textsf{fork}(S_c, \mathit{pair}_w)$ at Line \ref{fork}) to analyze each matching case. If no operations can be matched, but there exist blocked processes, a deadlock happens (Line \ref{deadlock}). Besides, for the LTL properties other than deadlock freedom (such as temporal properties), we also check them during symbolic execution (omitted for brevity).

Take the program in \mbox{Figure~\ref{fig:example_lazy}} for example. When all the processes block at \verb"Barrier", \mpise\ matches the recorded operation in the buffers of the processes, \ie,
{\small$s_0.\mathcal{B}{=}\langle\verb"ISend(1,req"_1){,} \verb"Barrier"\rangle$}, {\small$s_1.\mathcal{B}{=}\langle\verb"IRecv(*,req"_2\verb")", \verb"Barrier"\rangle$}, and {\small$s_2.\mathcal{B}{=}\langle\verb"Barrier"\rangle$}. According to the MPI standard, each operation in the buffers is ready to be matched. Hence, $\textsf{Matching}$ first matches the non-wildcard operations, \ie, the {\small\verb"Barrier"} operations, then the status of each process becomes $\mathsf{active}$. After that, \mpise\ continues to execute the active processes and record issued MPI operations. The next $\textsf{GlobalBlocking}$ point is: $P_0$ and $P_2$ terminate, and $P_1$ blocks at \verb"Wait(req"$_2$\verb")". The buffers are {\small$\langle\verb"ISend(1,req"_1\verb")"{,} \verb"Wait(req"_1\verb")"\rangle$}, {\small$\langle\verb"IRecv(*,req"_2\verb")"{,} \verb"Wait(req"_2\verb")"\rangle$}, and {\small$\langle\verb"ISend(1,req"_3\verb")", \verb"Wait(req"_3\verb")"\rangle$}, respectively. All the issued \verb"Wait" operations are not ready to match, because the corresponding non-blocking operations are not matched. So $\textsf{Matching}$ needs to match the wildcard operation, \ie, $\verb"IRecv(*,req"_2\verb")"$, which can be matched with $\verb"ISend(1,req"_1\verb")"$ or $\verb"ISend(1,req"_3\verb")"$. Then, a new state is forked for each case and added to the $\mathit{worklist}$.

\begin{figure}
\begin{center}
{\begin{tabular}{l|l|l}\hline
$P_0$&$P_1$&$P_2$\\ \hline
\verb"ISend(1,req"$_1$\verb")"; & \verb"IRecv(*,req"$_2$\verb")"; & \verb"Barrier";\\
\verb"Barrier"; & \verb"Barrier"; & \verb"ISend(1,req"$_3$\verb")";\\
\verb"Wait(req"$_1$\verb")" & \verb"Wait(req"$_2$\verb")" & \verb"Wait(req"$_3$\verb")"\\
\hline
\end{tabular}}
\end{center}
\caption{An example of operation matching.}
\label{fig:example_lazy}
\end{figure}

\textbf{Correctness}. Blocking-driven symbolic execution is an instance of model checking with POR.
We have proved the symbolic execution method is correct for \emph{reachability properties} \citep{DBLP:books/daglib/0077033}. Due to the space limit, the proof can be referred to \cite{mpisv-arxiv}.%


\section{CSP Based Path Modeling}\label{boosting}

In this section, we first introduce the CSP~\cite{roscoe1998theory} language. Then, we present the modeling algorithm of an MPI program terminated path using a subset of CSP.
Finally, we prove the soundness and completeness of our modeling.
\subsection{CSP Subset}

Let $\Sigma$ be a {\em finite} set of {\em events}, $\mathbb{C}$ a set of channels, and $\mathbf{X}$ a set of variables. Figure~\ref{CSPSubset} shows the syntax of the CSP subset, where $P$ denotes a CSP process, $a {\in} \Sigma$, $c {\in} \mathbb{C}$, $X {\subseteq} \Sigma$ and $x {\in} \mathbf{X}$.
\noindent
\begin{figure}[!hbt]
\begin{center}
{
\begin{tabular}{l}
$P := a \mid P\ {\fatsemi}\ P \mid P \square P \mid P {\underset{X}{\parallel}} P \mid c?x {\rightarrow} P \mid c!x {\rightarrow} P \mid \textbf{skip} $
\end{tabular}
}
\end{center}
\caption{The syntax of a CSP subset.}\label{CSPSubset}
\end{figure}

The single event process $a$ performs the event $a$ and terminates. There are three operators: sequential composition ($\fatsemi$), external choice ($\square$) and parallel composition with synchronization ($\underset{X}{\parallel}$). $P\square Q$ performs as $P$ or $Q$, and the choice is made by the environment. Let $PS$ be a finite set of processes, $\square PS$ denotes the external choice of all the processes in $PS$. $P\underset{X}{\parallel} Q$ performs $P$ and $Q$ in an interleaving manner, but $P$ and $Q$ synchronize on the events in $X$. The process $c?x \rightarrow P$ performs as $P$ after reading a value from channel $c$ and writing the value to variable $x$. The process $c!x \rightarrow P$ writes the value of $x$ to channel $c$ and then behaves as $P$. Process $\textbf{skip}$ terminates immediately.

\subsection{CSP Modeling}\label{cspmodeling}

For each violation-free program path, Algorithm \ref{alg:Modeling} builds a precise CSP model of the possible communication behaviors by changing the matchings and interleavings of the communication operations along the path. 
{\SetAlgoNoLine
{
\begin{algorithm}[!b]
\setstretch{0.85}
\caption{CSP Modeling for a Terminated State}
\label{alg:Modeling}
\LinesNumbered
\DontPrintSemicolon
$\textsf{GenerateCSP}(S)$\\
\KwData{A terminated global state $S$, and $\mathsf{Seq}(S) {=} \{Seq_i \mid 0 \le i \le n\}$}
\Begin{
$PS \leftarrow \emptyset$ \\
\For{$i \leftarrow 0 \ \dots\ n\ $}{
$P_i \leftarrow \textbf{skip}$\\
$Req \leftarrow \{r \mid \texttt{IRecv(*,r)} {\in} Seq_i {\vee} \texttt{IRecv(i,r)} {\in} Seq_i\}$\\
\For{$j \leftarrow\!length(Seq_i)-1\,\dots\,0\ $}{
   \Switch{$op_j$}{
   \Case{\emph{\texttt{Ssend(i)}}}{
   $c_1 \leftarrow \textsf{Chan}(op_j)$   \tcp*{$c_1$'s size is 0}
   $P_i \leftarrow c_1!x \rightarrow P_i$\label{ssend} \\
   }
   \Case{\emph{\texttt{Send(i)}}\ or\ \emph{\texttt{ISend(i,r)}}}{ \label{sendS}
   $c_2 \leftarrow \textsf{Chan}(op_j)$ \tcp*{$c_2$'s size is 1}
   $P_i \leftarrow c_{2}!x \rightarrow P_i$\label{send} \\
   }
   \Case{\emph{\texttt{Barrier}}}{ \label{BarrierS}
   $P_i \leftarrow $\ \texttt{B}$\ \fatsemi\ P_i$ \label{barrier} \label{BarrierE}\\
   }
   \Case{\emph{\texttt{Recv(i)}}\ or\ \emph{\texttt{Recv(*)}}}{
   $C \leftarrow \textsf{StaticMatchedChannel}(op_j, S)$\label{static1}\\
   $Q \leftarrow \textsf{Refine} (\square \{c?x\rightarrow \textbf{skip} \mid c \in C\}, S)$\label{choice1}\\
   $P_i \leftarrow Q \fatsemi P_i$\label{seq} \\
   }
   \Case{\emph{\texttt{IRecv(*,r)}}\ or \ \emph{\texttt{IRecv(i,r)}}}{
   $C \leftarrow \textsf{StaticMatchedChannel}(op_j, S)$\label{static2} \\
   $Q \leftarrow \textsf{Refine} (\square \{c?x\rightarrow \textbf{skip} \mid c \in C\}, S)$\label{choice2}\\
   $e_w {\leftarrow} \textsf{WaitEvent}(op_j)$ \label{choiceW}\hspace{-1mm}\tcp*{$op_j$'s wait event}
   $P_i \leftarrow (Q \fatsemi e_w) \underset{\{e_w\}}{\parallel}\ P_i$ \label{para} \\
   }
   \Case{\emph{\texttt{Wait(r)} and $r \in Req$}
   }{ \label{wait}
   $e_w \leftarrow \textsf{GenerateEvent}(op_j)$\label{waitevent}\\
   $P_i \leftarrow e_w \fatsemi P_i$ \label{waitE}
   }
   }
}
   $PS \leftarrow PS \cup \{P_i\}$
   }
   $P \leftarrow \underset{\{\texttt{B}\}}{\parallel}PS$\label{compose}\label{parallel}\\
    \Return $P$
   }
\end{algorithm}}
}
The basic idea is to model the communication operations in each process as a CSP process, then compose all the CSP processes in parallel to form the model. To model $\textsf{Proc}_i$, we scan its operation sequence $Seq_i$ in reverse. For each operation, we generate its CSP model and compose the model with that of the remaining operations in $Seq_i$ \wrt the semantics of the operation and the MPI standard~\cite{MPI}. The modeling algorithm is efficient, and has a polynomial time complexity \wrt the total length of the recorded MPI operation sequences.

We use channel operations in CSP to model send and receive operations. Each send operation $op$ has its own channel, denoted by $\textsf{Chan}(op)$. We use a \emph{zero-sized} channel to model \verb"Ssend" operation (Line \ref{ssend}), because \verb"Ssend" blocks until the message is received. In contrast, considering a \verb"Send" or \verb"ISend" operation is completed immediately, we use \emph{one-sized} channels for them (Line~\ref{send}), so the channel writing returns immediately. The modeling of \verb"Barrier"  (Line~\ref{barrier}) is to generate a synchronization event that requires all the parallel CSP processes to synchronize it (Lines~\ref{BarrierE}\&\ref{parallel}). The modeling of receive operations consists of three steps. The first step calculates the possibly matched channels written by the send operations (Lines~\ref{static1}\&\ref{static2}). The second uses the external choice of reading actions of the matched channels (Lines~\ref{choice1}\&\ref{choice2}), so as to model different cases of the receive operation. Finally, the refined external choice process is composed with the remaining model. If the operation is blocking, the composition is sequential (Line \ref{seq}); otherwise, it is a parallel composition (Line \ref{para}).

$\textsf{StaticMatchedChannel}(op_j, S)$ (Lines~\ref{static1}\&\ref{static2}) returns the set of the channels written by the possibly matched send operations of the receive operation $op_j$.
We scan $\mathsf{Seq}(S)$ to obtain the possibly matched send operations of $op_j$. Given a receive operation $recv$ in process $\textsf{Proc}_i$,  $\textsf{SMO}(recv, S)$ calculated as follows denotes the set of the matched send operations of $recv$.
\begin{itemize}[leftmargin=1em]
\item If $recv$ is $\verb"Recv"(j)$ or $\verb"IRecv"(j, \verb"r")$, $\textsf{SMO}(recv,S)$ contains $\textsf{Proc}_j$'s send operations with $\textsf{Proc}_i$ as the destination process.
\item If $recv$ is $\verb"Recv"(*)$ or $\verb"IRecv"(*,\verb"r")$, $\textsf{SMO}(recv,S)$ contains \emph{any process}'s send operations with $\textsf{Proc}_i$ as the destination process.
\end{itemize}

$\textsf{SMO}(op, S)$ over-approximates $op$'s precisely matched operations, and can be optimized by removing the send operations that are definitely executed after $op$'s completion, and the ones whose messages are definitely received before $op$'s issue.
For example, Let $\mathsf{Proc}_0$ be {\small\verb"Send(1)"$;$\verb"Barrier"$;$\verb"Send(1)"}, and $\mathsf{Proc}_1$ be {\small\verb"Recv(*)"$;$\verb"Barrier"}. $\textsf{SMO}$ will add the two send operations in $\mathsf{Proc}_0$ to the matching set of the \verb"Recv(*)" in $\mathsf{Proc}_1$. Since \verb"Recv(*)" must complete before $\verb"Barrier"$, we can remove the second send operation in $\mathsf{Proc}_0$.
Such optimization reduces the complexity of the CSP model. For brevity, we use $\textsf{SMO}(op, S)$ to denote the optimized matching set. Then, $\textsf{StaticMatchedChannel}(op_j, S)$ is
$
\{ \textsf{Chan}(op) \mid op \in \textsf{SMO}(op_j, S) \}
$.

To satisfy the MPI requirements, $\textsf{Refine}(P, S)$ (Lines~\ref{choice1}\&\ref{choice2}) refines the models of receive operations by imposing the completes-before requirements~\cite{DBLP:conf/cav/VakkalankaGK08} as follows:
\begin{itemize}[leftmargin=1em]
\setlength{\itemsep}{0pt}\setlength{\parsep}{0pt}\setlength{\parskip}{0pt}
\item If a receive operation has multiple matched send operations from the same process, it should match the earlier issued one. This is ensured by checking the emptiness of the dependent channels.
\item The receive operations in the same process should be matched \wrt their issue order if they receive messages from the same process, except the \emph{conditional completes-before} pattern \cite{DBLP:conf/cav/VakkalankaGK08}. We use one-sized channel actions to model these requirements.
\end{itemize}

We model a \verb"Wait" operation if it corresponds to an \verb"IRecv" operation (Line \ref{wait}), because \verb"ISend" operations complete immediately under the assumption of infinite system buffer.
\verb"Wait" operations are modeled by the synchronization in parallel processes. $\textsf{GenerateEvent}$ generates a new synchronization event $e_w$ for each \verb"Wait" operation (Line \ref{waitevent}). Then, $e_w$ is produced after the corresponding non-blocking operation is completed (Line \ref{para}). The synchronization on $e_w$ ensures that a \verb"Wait" operation blocks until the corresponding non-blocking operation is completed.

We use the example in Figure~\ref{fig:example_lazy} for a demonstration. After exploring a \hbox{violation-free} path, the recorded operation sequences are {\small$Seq_0{=}\langle\verb"ISend(1,req"_1\verb")", \verb"Barrier",\verb"Wait"\verb"(req"_1\verb")"\rangle$}, {\small$Seq_1{=}\langle\verb"IRecv(*,req"_2\verb")",$}\\{\small$ \verb"Barrier"{,} \verb"Wait(req"_2\verb")"\rangle$}, {\small$Seq_2{=}\langle\verb"Barrier"{,} \verb"ISend(1,req"_3\verb")"{,} \verb"Wait(req"_3\verb")"\rangle$}. We first scan $Seq_0$ in reverse. Note that we don't model \verb"Wait(req"$_1$\verb")", because it corresponds to \verb"ISend".
We create a synchronization event \verb"B" for modeling \verb"Barrier" (Lines~\ref{BarrierS}\&\ref{BarrierE}). For the \verb"ISend(1,req"$_1$\verb")", we model it by writing an element $a$ to a one-sized channel $chan_{1}$, and use prefix operation to compose its model with \verb"B" (Lines~\ref{sendS}-\ref{send}).
In this way, we generate CSP process $chan_1!a{\rightarrow} \verb"B"\fatsemi\textbf{skip}$ (denoted by $\mathit{CP}_0$) for $\mathsf{Proc}_0$. Similarly, we model $\mathsf{Proc}_2$ by $\verb"B"\fatsemi chan_2!b{\rightarrow} \textbf{skip}$ (denoted by $\mathit{CP}_2$), where $chan_2$ is also a one-sized channel and $b$ is a channel element. For $\mathsf{Proc}_1$, we generate a single event process $e_w$ to model \verb"Wait(req"$_2$\verb")", because it corresponds to \verb"IRecv" (Lines~\ref{wait}-\ref{waitE}). For \verb"IRecv(*,req"$_2$\verb")", we first compute the matched channels using $\textsf{SMO}$ (Line~\ref{static2}), and $\textsf{StaticMatchedChannel}(op_j, S)$ contains both $chan_1$ and $chan_2$. Then, we generate the following CSP process
\[
((chan_1?a{\rightarrow}\textbf{skip} {\square} chan_2?b{\rightarrow}\textbf{skip})\fatsemi e_w) {\underset{\{e_w\}}{\parallel}} (\verb"B"\fatsemi e_w\fatsemi\textbf{skip})
\]
(denoted by $\mathit{CP}_1$) for $\mathsf{Proc}_1$. Finally, we compose the CSP processes using the parallel operator to form the CSP model (Line~\ref{parallel}), \emph{i.e.}, $\mathit{CP}_0\underset{\{\texttt{B}\}}{\parallel}\ \mathit{CP}_1\underset{\{\texttt{B}\}}{\parallel}\ \mathit{CP}_2$.

CSP modeling supports the case where communications depend on message contents.
\mpise tracks the influence of a message during symbolic execution. When detecting that the message content influences the communications, \mpise symbolizes the content on-the-fly. We specially handle the widely used \emph{master-slave} pattern for dynamic load balancing \cite{Gropp:2014:UMP:2717061}. The basic idea is to use a recursive CSP process to model each slave process and a conditional statement for master process to model the communication behaviors of different matchings. We verified five dynamic load balancing MPI programs in our experiments (\cf Section \ref{exp-result}). The details for supporting master-slave pattern is%
\ifdefined\supplementary
\ in Appendix \ref{master-slave-support}.
\else
\ in the supplementary document.
\fi

\subsection{Soundness and Completeness}\label{sec:prove}

In the following, we show that the CSP modeling is \emph{sound} and \emph{complete}.
Suppose \textsf{GenerateCSP($S$)} generates the CSP process $\textsf{CSP}_s$. Here, \emph{soundness} means that $\textsf{CSP}_s$ models all the possible behaviors by changing the matchings or interleavings of the communication operations along the path to $S$, and \emph{completeness} means that each trace in $\textsf{CSP}_s$ represents a real behavior that can be derived from $S$ by changing the matchings or interleavings of the communications.

Since we compute $\textsf{SMO}(op, S)$ by statically matching the arguments of the recorded operations, $\textsf{SMO}(op, S)$ may contain some false matchings. Calculating the precisely matched operations of $op$ is NP-complete~\citep{forejt2014precise}, and we suppose such an ideal method exists. We use $\textsf{CSP}_{static}$ and $\textsf{CSP}_{ideal}$ to denote the generated models using $\textsf{SMO}(op, S)$ and the ideal method, respectively.
The following theorems ensure the equivalence of the two models under the stable-failure semantics \cite{roscoe1998theory} of CSP and $\textsf{CSP}_{static}$'s consistency to the MPI semantics, which imply the soundness and completeness of our CSP modeling method. 
Let $\mathcal{T}(P)$ denote the trace set~\citep{roscoe1998theory} of CSP process $P$, and $\mathcal{F}(P)$ denote the failure set of CSP process $P$. Each element in $\mathcal{F}(P)$ is $(s, X)$, where $s \in \mathcal{T}(P)$ is a trace, and $X$ is the set of events $P$ refuses to perform after $s$.

\begin{thm}\label{thm:failure}
$\mathcal{F}(\emph{\textsf{CSP}}_{static}) =  \mathcal{F}(\emph{\textsf{CSP}}_{ideal})$.
\end{thm}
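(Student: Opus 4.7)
The plan is to prove the equality by double inclusion. The forward inclusion $\mathcal{F}(\textsf{CSP}_{ideal}) \subseteq \mathcal{F}(\textsf{CSP}_{static})$ should be nearly immediate from the containment $\textsf{SMO}_{ideal}(op,S) \subseteq \textsf{SMO}(op,S)$. Since Algorithm~\ref{alg:Modeling} is identical in the two builds except at the external-choice nodes generated at Lines~\ref{choice1} and~\ref{choice2}, where the ideal choice set is a subset of the static one, every trace realizable in $\textsf{CSP}_{ideal}$ is structurally realizable in $\textsf{CSP}_{static}$. For refusals, the subtlety is that a larger choice set can \emph{decrease} refusals, so one must argue that the extra channels in $\textsf{CSP}_{static}$ contribute no enabled events along any ideal-reachable trace; I would justify this by showing that, at any such trace, the corresponding senders have either not yet written their channels or have already had their writes consumed, so the extra $c?x$ branches are offered by the receive but not enabled by the environment.

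The reverse inclusion $\mathcal{F}(\textsf{CSP}_{static}) \subseteq \mathcal{F}(\textsf{CSP}_{ideal})$ is the substantive direction. The key claim is that every channel in $\textsf{SMO}(op,S)\setminus\textsf{SMO}_{ideal}(op,S)$ corresponds to a matching forbidden by some MPI rule (non-overtaking, completes-before, barrier ordering, or Wait synchronization), and that each such rule is faithfully enforced somewhere in the static construction: barrier ordering via the synchronization event \texttt{B} at Lines~\ref{BarrierS}--\ref{BarrierE} and the parallel composition at Line~\ref{parallel}; Wait/non-blocking ordering via the fresh event $e_w$ and the parallel composition at Lines~\ref{waitevent}--\ref{para}; non-overtaking and intra-process completes-before via the emptiness checks on one-sized channels inserted by \textsf{Refine} at Lines~\ref{choice1} and~\ref{choice2}; and Ssend synchronous semantics via the zero-sized channel at Line~\ref{ssend}. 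I would then perform a structural induction, backward over the operation sequences $Seq_i$, establishing as the inductive invariant that the partial CSP model built so far has failure set equal to its ideal counterpart. Each inductive step reduces to a case analysis on the head operation: for $\texttt{Send}/\texttt{ISend}/\texttt{Ssend}/\texttt{Barrier}/\texttt{Wait}$ the two constructions agree syntactically; for $\texttt{Recv}$ and $\texttt{IRecv}$ one must show that the extra summands in $\square\{c?x\to\textbf{skip}\mid c\in C\}$, after being composed with the rest of $\textsf{CSP}_{static}$ through \textsf{Refine} and the global parallel composition, are blocked in every stable state of the combined system. A convenient way to discharge this is to exhibit a weak bisimulation between $\textsf{CSP}_{static}$ and $\textsf{CSP}_{ideal}$ that collapses the never-enabled summands, then invoke the standard fact that bisimulation preserves stable failures~\cite{roscoe1998theory}.

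The hardest part of the argument will be handling the \emph{conditional completes-before} pattern mentioned in Section~\ref{cspmodeling}, where whether a receive must precede a subsequent receive depends on the matching chosen earlier in the same process. Because \textsf{Refine} encodes this pattern only through small local channel-emptiness gadgets, I will need to verify that those gadgets really do forbid precisely the orderings that the ideal matcher forbids, which in turn requires making the MPI non-overtaking semantics used by the (hypothetical) ideal matcher explicit enough that a one-to-one correspondence between forbidden orderings and blocked CSP transitions can be read off. A secondary obstacle is that the ideal matcher is only known to exist (its NP-completeness is cited but no construction is given), so the whole argument must proceed semantically through the MPI standard's completes-before relation rather than by inspecting an algorithm; once framed this way, the bisimulation is guided by the MPI happens-before order on issued operations rather than by any syntactic feature of the two models.
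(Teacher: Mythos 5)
Your overall goal (double inclusion) and your identification of the crux --- that the extra channels introduced by $\textsf{SMO}$ must never actually be readable at the point the receive is offered --- are on target, but the route you propose has two genuine gaps. First, your inductive invariant is false as stated: the \emph{partial, per-process} CSP models built during the backward scan are \emph{not} failure-equivalent to their ideal counterparts, because the static receive model genuinely offers strictly more initial events ($\square\{c?x\to\textbf{skip}\mid c\in C\}$ over a larger $C$); the equivalence only emerges for the closed parallel composition, where the environment determines which channels are non-empty. A structural induction over $Seq_i$ with that invariant therefore fails already at the base case of every receive, and patching it means carrying the entire global context through the induction, at which point you have abandoned the induction in all but name. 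Second, your reverse inclusion rests on the claim that every channel in $\textsf{SMO}(op,S)\setminus\textsf{SMO}_{ideal}(op,S)$ is excluded by a specific local MPI rule that some \textsf{Refine} gadget enforces. But the ideal matcher is defined \emph{semantically} (the sends that can feasibly match in some execution; computing this is the cited NP-complete problem), so a send can be excluded for global scheduling reasons that no single gadget captures; establishing your ``one-to-one correspondence between forbidden orderings and blocked CSP transitions'' amounts to re-proving the consistency theorem (Theorem 4.2) inside this one, and is far harder than necessary.

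The paper's proof avoids both problems with a shorter global argument. It first reduces failure equality to trace equality by observing that the models contain no internal choice, so refusals after a trace $s$ are determined by which events extend $s$; this is the step your ``never-enabled summands'' remark gestures at but does not supply. It then proves $\mathcal{T}(\textsf{CSP}_{static})\subseteq\mathcal{T}(\textsf{CSP}_{ideal})$ by contradiction on the \emph{first} extra channel read $e_k=c_e?x$ in a putative static-only trace: the matching write $c_e!y$ must occur in the prefix, the prefix (containing no extra reads) is also an ideal trace, so $c_e$ is non-empty in the ideal model at that point --- i.e., the static trace itself is a witness that the matching is feasible, contradicting its exclusion by the ideal matcher. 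This makes the whole case analysis over non-overtaking, barriers, \texttt{Wait}, and the conditional completes-before pattern --- which you correctly flag as the hardest part of your plan --- unnecessary for this theorem. I recommend restructuring along these lines: prove trace equality first, then transfer to failures via determinism, rather than attempting a bisimulation between the two constructions.
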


\begin{proof} 

We only give the skeleton of the proof. We first prove 
\[\mathcal{T}(\textsf{CSP}_{static}) =  \mathcal{T}(\textsf{CSP}_{ideal})\]
based on which we can prove $\mathcal{F}(\emph{\textsf{CSP}}_{static}) =  \mathcal{F}(\emph{\textsf{CSP}}_{ideal})$. The main idea of proving these two equivalence relations is to use \emph{contradiction} for proving the subset relations. We only give the proof of $\mathcal{T}(\textsf{CSP}_{static}) \subseteq \mathcal{T}(\textsf{CSP}_{ideal})$; the other subset relations can be proved in a similar way. 

Suppose there is a trace $t {=} \langle e_1, ...,  e_n\rangle$ such that $ t \in \mathcal{T}(\textsf{CSP}_{static})$ but $t {\notin} \mathcal{T}(\textsf{CSP}_{ideal})$. The only difference between $\textsf{CSP}_{static}$ and $\textsf{CSP}_{ideal}$ is that $\textsf{CSP}_{static}$ introduces more channel read operations during the modeling of receive operations. Hence, there must exist a read operation of an extra channel in $t$. Suppose the first extra read is $e_k {=} c_e?x$, where $1 \le k \le n$. Therefore, $c_e$ \emph{cannot} be read in $\textsf{CSP}_{ideal}$ when the matching of the corresponding receive operation starts, but $c_e$ is not empty at $e_k$ in $\textsf{CSP}_{static}$. Despite of the size of $c_e$, there must exist a write operation $c_e!y$ in $\langle e_1, ...,  e_{k-1}\rangle$. Because $\langle e_1, ...,  e_{k-1}\rangle$ is also a valid trace in $\textsf{CSP}_{ideal}$, it means $c_e$ is not empty in $\textsf{CSP}_{ideal}$ at $e_k$, which contradicts with the assumption that $c_e$ cannot be read in $\textsf{CSP}_{ideal}$. Hence, $\mathcal{T}(\textsf{CSP}_{static}) \subseteq  \mathcal{T}(\textsf{CSP}_{ideal})$ holds.
\end{proof}

\begin{thm}
$\emph{\textsf{CSP}}_{static}$ is consistent with the MPI semantics.
\end{thm}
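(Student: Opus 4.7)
The plan is to reduce to the ideal model and then argue operation-by-operation. By Theorem~\ref{thm:failure}, $\textsf{CSP}_{static}$ and $\textsf{CSP}_{ideal}$ have identical failure sets, hence are stable-failures equivalent. So it suffices to prove that $\textsf{CSP}_{ideal}$, which uses the precisely matched operation sets, is consistent with the MPI semantics as defined by the CSM formalization referenced in Section~\ref{sec:2-1}. I would formalize ``consistent'' as: every trace of $\textsf{CSP}_{ideal}$ (projected onto the observable communication events) corresponds to a legal MPI execution obtainable from the path to $S$ by reordering interleavings and varying matchings of wildcard operations, and conversely every such legal MPI execution is realized by some trace of $\textsf{CSP}_{ideal}$.

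Next I would argue the correspondence by structural induction on the reverse scan of each $Seq_i$ in Algorithm~\ref{alg:Modeling}, showing that each clause faithfully reflects the MPI semantics of the corresponding operation. For \texttt{Ssend}, using a zero-sized channel forces the writer to synchronize with the reader, matching the blocking-until-received semantics. For \texttt{Send} and \texttt{ISend}, the one-sized channel lets the writer return immediately but buffers at most one pending message from that source, which under the infinite-system-buffer assumption is the correct projection because the completes-before refinement (discussed below) prevents a second write from the same source to the same destination from being issued before the first is consumed. For \texttt{Barrier}, introducing a shared event \texttt{B} that appears in the synchronization alphabet of the top-level parallel composition (Line~\ref{parallel}) is exactly the CSP rendition of an all-processes rendezvous. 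For blocking receives, sequential composition with an external choice over the reading actions of the matched channels captures both blocking and non-deterministic matching. For non-blocking receives and their \texttt{Wait}, the interleaved parallel with synchronization on the freshly generated event $e_w$ is the standard CSP idiom for ``the \texttt{Wait} blocks until the non-blocking operation completes'', while allowing subsequent statements of the same process to execute in any order with the pending receive.

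Then I would address the global composition and the refinement step. The top-level parallel operator with synchronization only on Barrier events (Line~\ref{parallel}) correctly reflects that MPI processes share memory only through communication channels and Barrier, so all other interleavings are admissible, which matches the CSM semantics. The non-trivial part is arguing that $\textsf{Refine}$ faithfully encodes the MPI non-overtaken and completes-before rules used in Algorithm~\ref{alg:matching}: (i) two sends from the same source to the same destination must be consumed in issue order, enforced by the emptiness guard on the dependent one-sized channel between the writer process and the reader's external choice; (ii) two receives in the same process from the same source must be matched in issue order, enforced analogously by one-sized channel actions threaded between the receives; (iii) the \emph{conditional completes-before} exception is handled by not imposing the guard when the two receives target different sources. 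For each case I would show, by chasing the added channel actions, that the set of legal firing orders in $\textsf{CSP}_{ideal}$ is exactly the set of MPI firing orders permitted by the standard's rules, as used by $\textsf{Matching}$.

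The main obstacle will be the refinement argument for the conditional completes-before pattern, since its formulation in \cite{DBLP:conf/cav/VakkalankaGK08} is subtle and the CSP encoding must admit exactly the reorderings that the MPI standard allows while blocking all others. I would discharge this by a small case analysis on pairs of receives in the same process, comparing the channel-guard structure produced by $\textsf{Refine}$ against the completes-before predicate, and showing each direction of inclusion; the remaining cases (Ssend, Barrier, Wait over ISend) follow by direct simulation of their CSM transitions. Combining the per-operation faithfulness, the correctness of the parallel composition, and the exactness of $\textsf{Refine}$ yields consistency of $\textsf{CSP}_{ideal}$, and Theorem~\ref{thm:failure} then transfers it to $\textsf{CSP}_{static}$.
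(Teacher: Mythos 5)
Your proposal follows essentially the same route as the paper's proof: both reduce the claim to $\textsf{CSP}_{ideal}$ via the failure-set equivalence of Theorem~\ref{thm:failure}, and then argue that $\textsf{CSP}_{ideal}$ coincides with the model induced by the MPI semantics by checking, operation by operation, that the channel encodings, the parallel compositions for \texttt{Barrier} and \texttt{Wait}, and the $\textsf{Refine}$ step capture exactly the completes-before relations (including the conditional completes-before pattern). Your write-up is more explicit than the paper's sketch about what the per-operation simulation and the $\textsf{Refine}$ case analysis must establish, but the decomposition and the key lemma are the same.
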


The proof's main idea is to prove that $\textsf{CSP}_{ideal}$ is equal to the model defined by the formal MPI semantics \cite{mpisv-arxiv} \wrt the failure divergence semantics. Then, based on Theorem \ref{thm:failure}, we can prove that $\emph{\textsf{CSP}}_{static}$ is consistent with the MPI semantics. Please refer to \cite{mpisv-arxiv} for the detailed proofs for these two theorems.



\section{Experimental Evaluation}\label{experiment}

In this section, we first introduce the implementation of \mpise, then describes the research questions and the experimental setup. Finally, we give experimental results.

\subsection{Implementation}\label{sec:implementation}
We have implemented \mpise
based on Cloud9~\citep{bucur2011parallel}, which is built upon KLEE~\citep{cadar2008klee}, and enhances KLEE with better support for POSIX environment and parallel symbolic execution. 
We leverage Cloud9's support for multi-threaded programs. We use a multi-threaded library for MPI, called AzequiaMPI~\citep{DBLP:conf/pvm/Rico-GallegoM11}, as the MPI environment model for symbolic execution. \mpise\ contains three main modules: program preprocessing, symbolic execution, and model checking. The program preprocessing module generates the input for symbolic execution. We use Clang to compile an MPI program to LLVM bytecode, which is then linked with the pre-compiled MPI library AzequiaMPI. The symbolic execution module is in charge of path exploration and property checking. 
The third module utilizes the state-of-the-art CSP model checker PAT~\citep{sun2009pat} to verify CSP models, and uses the output of PAT to boost the symbolic executor.

\subsection{Research Questions}
We conducted experiments to answer the following questions:
\begin{itemize}[leftmargin=1em]
\setlength{\itemsep}{0pt}\setlength{\parsep}{0pt}\setlength{\parskip}{0pt}
\item Effectiveness: Can \mpise\ verify real-world MPI programs effectively? How effective is \mpise\ when compared to the existing state-of-the-art tools?
\item Efficiency: How efficient is \mpise\ when verifying real-world MPI programs? How efficient is \mpise when compared to the pure symbolic execution?
\item Verifiable properties : Can \mpise\ verify properties other than deadlock freedom?
\end{itemize}

\subsection{Setup}
Table 1 lists the programs analyzed in our experiments. All the programs are real-world open source MPI programs. \texttt{DTG} is a testing program from~\cite{vakkalanka2010efficient}. \texttt{Matmat}, \texttt{Integrate} and \texttt{Diffusion2d} come from the FEVS benchmark suite~\citep{siegel2011fevs}. \texttt{Matmat} is used for matrix multiplication,
\texttt{Integrate} calculates the integrals of trigonometric functions, and \texttt{Diffusion2d} is a parallel solver for two-dimensional diffusion equation. \texttt{Gauss\_elim} is an MPI implementation for gaussian elimination used in~\cite{xue2009mpiwiz}. \texttt{Heat} is a parallel solver for heat equation used in~\cite{muller2011dealing}. \texttt{Mandelbrot}, \texttt{Sorting} and \texttt{Image\_manip} come from github. \texttt{Mandelbrot} parallel draws the mandelbrot set for a bitmap, \texttt{Sorting} uses bubble sort to sort a multi-dimensional array, and \texttt{Image\_manip} is an MPI program for image manipulations, \emph{e.g.}, shifting, rotating and scaling. The remaining three programs are large parallel applications. \texttt{Depsolver} is a parallel multi-material 3D electrostatic solver, \texttt{Kfray} is a ray tracing program creating realistic images, and \texttt{ClustalW} is a tool for aligning gene sequences.

{
\begin{table}[!t]
\caption{The programs in the experiments.}
\label{TABLE:Bechmarks}
\begin{center}
{\begin{tabular}{l|r|l}
\hline
  \textbf{Program}         & \textbf{LOC}            &   \textbf{Brief Description} \\
  \hline
   \texttt{DTG} & $90$ & Dependence transition group \\
  \hline
   \texttt{Matmat} & $105$ & Matrix multiplication \\
  \hline
   \texttt{Integrate} & $181$ & Integral computing \\
  \hline
   \texttt{Diffusion2d} & $197$ & Simulation of diffusion equation \\
  \hline
   \texttt{Gauss\_elim} & $341$ & Gaussian elimination \\
  \hline
   \texttt{Heat} & $613$ & Heat equation solver \\
  \hline
   \texttt{Mandelbrot} & $268$ & Mandelbrot set drawing \\
  \hline
   \texttt{Sorting} & $218$ & Array sorting \\
  \hline
   \texttt{Image\_manip} & $360$ & Image manipulation \\
  \hline
   \texttt{DepSolver} & $8988$ & Multimaterial electrostatic solver \\
  \hline
   \texttt{Kfray} & $12728$ & KF-Ray parallel raytracer \\
  \hline
   \texttt{ClustalW} & $23265$ & Multiple sequence alignment \\
  \hline
   \textbf{Total} & \textbf{47354} & \textbf{12 open source programs} \\
  \hline
\end{tabular}}
\end{center}
\end{table}
}

To evaluate \mpise\ further, we mutate~\citep{just2014mutants} the programs by rewriting a randomly selected receive using two rules: (1) replace {\small$\texttt{Recv(i)}$} with {\small$\textbf{if}\ (x{>}a) \{\texttt{Recv(i)}\}\ \textbf{else}\ \{\texttt{Recv(*)}\}$}; (2) replace {\small$\texttt{Recv(*)}$} with {\small$\textbf{if}\ (x{>}a) \{\texttt{Recv(*)}\}\ \textbf{else}\ \{\texttt{Recv(j)}\}$}. Here $x$ is an input variable, $a$ is a random value, and $j$ is generated randomly from the scope of the process identifier. The mutations for $\texttt{IRecv(i,r)}$ and $\texttt{IRecv(*,r)}$ are similar. Rule 1 is to improve program performance and simplify programming, while rule 2 is to make the communication more deterministic. Since communications tend to depend on inputs in complex applications, such as the last three programs in Table~\ref{TABLE:Bechmarks}, we also introduce input related conditions. For each program, we generate five mutants if possible, or generate as many as the number of receives. We don't mutate the programs using \emph{master-slave} pattern \cite{Gropp:2014:UMP:2717061}, \ie, \texttt{Matmat} and \texttt{Sorting}, and only mutate the static scheduling versions of programs \texttt{Integrate}, \texttt{Mandelbrot}, and \texttt{Kfray}.

\newcommand{\baseline}{\textsf{Symbolic execution}}
\newcommand{\ours}{\textsf{Our approach}}

{\begin{table*}[htb]
\caption{Experimental results.}
\label{TABLE:NEW}
\begin{center}
{\small\begin{tabular}{c|c|c|c|c|c|c}
 \hline
 \multirow{2}{*}{\textbf{Program} \textbf{(\#Procs)}} & \multirow{2}{*}{\textbf{T}}& \multirow{2}{*}{\textbf{\!Deadlock\!}} &\multicolumn{2}{c|}{\textbf{Time(s)}}&\multicolumn{2}{c}{\textbf{\#Iterations}}\\
 \cline{4-7}
& & &\baseline &\!\ours\! &\baseline &\!\ours\! \\
 \hline
 \multirow{6}{*}{ \texttt{DTG}\! (5) } \!& $o$ & 0 & $10.12$ & \cellcolor{gray!60}$9.02$ & $3$ & \cellcolor{gray!60}$1$ \\
 \cline{2-7}
 & $\textit{m}_1$& 0 & $13.69$ & \cellcolor{gray!60}$9.50$ & $10$ & \cellcolor{gray!60}$2$\\
 \cline{2-7}
 & $\textit{m}_2$& 1 & $10.02$ & \cellcolor{gray!60}$8.93$ & $4$ & \cellcolor{gray!60}$2$ \\
 \cline{2-7}
 & $\textit{m}_3$& 1 & $10.21$ & \cellcolor{gray!60}$9.49$ & $4$ & \cellcolor{gray!60}$2$\\
 \cline{2-7}
 & $\textit{m}_4$& 1 & $10.08$ & \cellcolor{gray!60}$9.19$ & $4$ &\cellcolor{gray!60}$2$ \\
 \cline{2-7}
 & $\textit{m}_5$& 1 & $9.04$ & $9.29$ & $2$ &$2$ \\
 \hline
 \texttt{Matmat}$^{*}$\!(4) & o & 0 & $36.94$ & \cellcolor{gray!60}$10.43$ & $54$ & $\cellcolor{gray!60}$1$$ \\
 \hline
 \multirow{3}{*}{  \texttt{Integrate}\! (6/8/10) } & $o$ & 0/0/0 & $78.17/\textsc{to}/\textsc{to}$ &$\cellcolor{gray!60}8.87/10.45/44.00$ &\!\!$120/3912/3162$\!\! & \cellcolor{gray!60}$1/1/1$ \\
 \cline{2-7}
 & $\textit{m}_1$& 0/0/-1 &$\textsc{to}/\textsc{to}/\textsc{to}$ &$\highlight{49.94}/\textsc{to}/\textsc{to}$ &$4773/3712/3206$ &$\highlight{32}/128/79$ \\
 \cline{2-7}
 & $\textit{m}_2$&\! 1/1/1\! &$9.35/9.83/9.94$ &$9.39/10.76/44.09$ &$\!2/2/2$\! &$2/2/2$ \\
 \hline
 \texttt{Integrate$^{*}$}\! (4/6) & o & 0/0 &  $24.18/123.55$ & \cellcolor{gray!60} $9.39/32.03$ & $27/125$ & \cellcolor{gray!60}$1/1$\\
 \hline
  \multirow{6}{*}{  \texttt{Diffusion2d}\! (4/6)} & $o$ & 0/0 & $106.86/\textsc{to}$ &\cellcolor{gray!60}9.84/13.19 &90/2041& \cellcolor{gray!60}1/1\\
 \cline{2-7}
 & $\textit{m}_1$& 0/1 & $110.25/11.95$  & $\highlight{10.18}/13.81$  & $90/2$  & $\highlight{1}/2$  \\
 \cline{2-7}
 & $\textit{m}_2$& 0/1 &  $3236.02/12.66$ & $\highlight{17.05}/14.38$  & $5850/3$  & \cellcolor{gray!60}$16/2$  \\
 \cline{2-7}
 & $\textit{m}_3$& 0/0 &$\textsc{to}/\textsc{to}$ &$\cellcolor{gray!60}19.26/199.95$ &$5590/4923$ & \cellcolor{gray!60}$16/64$\\
 \cline{2-7}
 & $\textit{m}_4$& 1/1 &$11.35\,/11.52$ &$\highlight{11.14}/14.22$ &$3/2$ &$\highlight{2}/2$ \\
 \cline{2-7}
 & $\textit{m}_5$& 1/0 &$10.98/\textsc{to}$ &\cellcolor{gray!60}$10.85/13.44$ &$2/1991$ &$2/\highlight{1}$ \\
 \hline
 \multirow{2}{*}{ \texttt{Gauss\_elim}\! (6/8/10) } & $o$ & 0/0/0 & $\textsc{to}/\textsc{to}/\textsc{to}$ &\cellcolor{gray!60}13.47/15.12/87.45 & 2756/2055/1662 &\cellcolor{gray!60}1/1/1 \\
 \cline{2-7}
 & $\textit{m}_1$& 1/1/1 & $155.40/\textsc{to}/\textsc{to}$&$\cellcolor{gray!60}14.31/16.99/88.79$ &$121/2131/559$ &\cellcolor{gray!60}$2/2/2$ \\
 \hline
 \multirow{6}{*}{\texttt{Heat}\! (6/8/10)} & $o$ & 1/1/1 & $17.31/17.99/20.51$ &$\highlight{16.75}/19.27/22.75$ & $2/2/2$ & \cellcolor{gray!60}$1/1/1$ \\
 \cline{2-7}
 & $\textit{m}_1$&  1/1/1  &$17.33/18.21/20.78$ &$\highlight{17.03}/19.75/23.16$ &$2/2/2$ &\cellcolor{gray!60}$1/1/1$ \\
 \cline{2-7}
 & $\textit{m}_2$&  1/1/1  &$18.35/18.19/20.74$ &$\highlight{16.36}/19.53/23.07$ &$2/2/2$ &\cellcolor{gray!60}$1/1/1$ \\
 \cline{2-7}
 & $\textit{m}_3$&  1/1/1  &$19.64/20.21/23.08$ & \cellcolor{gray!60}$16.36/19.72/22.95$ &$3/3/3$  &\cellcolor{gray!60}$1/1/1$  \\
 \cline{2-7}
 & $\textit{m}_4$&  1/1/1  &$22.9/24.73/27.78$ &\cellcolor{gray!60}$16.4/19.69/22.90$  &$9/9/9$  &\cellcolor{gray!60}$1/1/1$  \\
 \cline{2-7}
 & $\textit{m}_5$&  1/1/1  &$24.28/28.57/32.67$  &\cellcolor{gray!60}$16.61/19.59/22.42$  &$7/7/7$  &\cellcolor{gray!60}$1/1/1$  \\
 \hline
 \multirow{4}{*}{ \texttt{Mandelbrot}\! (6/8/10) } & $o$ & \!0/0/-1\! & $\textsc{to}/\textsc{to}/\textsc{to}$ &$\highlight{117.68}/\highlight{831.87}/\textsc{to}$ &$500/491/447$ &$\highlight{9}/\highlight{9}/9$ \\
 \cline{2-7}
 & $\textit{m}_1$& \!-1/-1/-1\! &$\textsc{to}/\textsc{to}/\textsc{to}$ &$\textsc{to}/\textsc{to}/\textsc{to}$ &$1037/1621/1459$ &$173/227/246$\\
 \cline{2-7}
 & $\textit{m}_2$& \!-1/-1/-1\! &$\textsc{to}/\textsc{to}/\textsc{to}$ &$\textsc{to}/\textsc{to}/\textsc{to}$&$1093/1032/916$ &$178/136/90$\! \\
 \cline{2-7}
 & $\textit{m}_3$& 1/1/1 &$10.71/11.17/11.92$ &$10.84/11.68/13.5$ &$2/2/2$ &$2/2/2$ \\
  \hline
 \texttt{Mandelbort$^{*}$}\! (4/6)& o & 0/0 &  $68.09/270.65$ & \cellcolor{gray!60} $12.65/13.21$ & $72/240$ & \cellcolor{gray!60}$2/2$\\
 \hline
 \texttt{Sorting$^{*}$}\! (4/6) & o & 0/0 &  $\textsc{to}/\textsc{to}$ & \cellcolor{gray!60} $19.18/46.19$ & $584/519$ & \cellcolor{gray!60}$1/1$\\
 \hline
 \multirow{2}{*}{  \texttt{Image\_mani}\! (6/8/10) }&$o$ & 0/0/0 & \!\!$97.69/118.72/141.87$\!\! &\cellcolor{gray!60}$18.68/23.84/27.89$ &$96/96/96 $ &\cellcolor{gray!60}$4/4/4$ \\
 \cline{2-7}
 & $\textit{m}_1$& 1/1/1 &$12.92/15.80/15.59$ &$14.15/\highlight{14.53}/16.86$ &$2/2/2$ &$2/2/2$ \\
 \hline
 \texttt{DepSolver}\! (6/8/10) & $o$ & 0/0/0 &\!\!$94.17/116.5/148.38$\!\! &\!$97.19/123.36/151.83$\! &$4/4/4$ &$4/4/4$ \\
 \hline
 \multirow{4}{*}{  \texttt{Kfray}\! (6/8/10) } & $o$ & 0/0/0 & $\textsc{to}/\textsc{to}/\textsc{to}$ &\cellcolor{gray!60}$51.59/68.25/226.96$ &$1054/981/1146$ &\cellcolor{gray!60}$1/1/1$ \\
 \cline{2-7}
 & $\textit{m}_1$&  1/1/1 & $52.15/53.50/46.83$ &$53.14/69.58/229.97$ &$2/2/2$ &$2/2/2$ \\
 \cline{2-7}
 & $\textit{m}_2$& \!-1/-1/-1\! & $\textsc{to}/\textsc{to}/\textsc{to}$ &$\textsc{to}/\textsc{to}/\textsc{to}$ &\!\!$1603/1583/1374$\!\!&$239/137/21$ \\
 \cline{2-7}
 & $\textit{m}_3$& 1/1/1 & $51.31/43.34/48.33$ &$\highlight{50.40}/71.15/230.18$ &$2/2/2$ &$2/2/2$ \\
 \hline
 \texttt{Kfray$^{*}$}\! (4/6) & o & 0/0 &  $\textsc{to}/\textsc{to}$ & \cellcolor{gray!60} $53.44/282.46$ & $1301/1575$ & \cellcolor{gray!60}$1/1$\\
 \hline
 \multirow{6}{*}{\texttt{Clustalw}\! (6/8/10) } & $o$ & 0/0/0 & $\textsc{to}/\textsc{to}/\textsc{to}$ & \cellcolor{gray!60}$47.28 / 79.38 / 238.37$ &1234/1105/1162 &\cellcolor{gray!60}1/1/1 \\
 \cline{2-7}
 & $\textit{m}_1$& 0/0/0 &$\textsc{to}/\textsc{to}/\textsc{to}$ &\cellcolor{gray!60}$47.94 / 80.10 / 266.16$ &$1365/1127/982$ &\cellcolor{gray!60}$1/1/1$ \\
 \cline{2-7}
 & $\textit{m}_2$& 0/0/0 &$\textsc{to}/\textsc{to}/\textsc{to}$ &\cellcolor{gray!60}$47.71 / 90.32 / 266.08$ &$1241/1223/915$ &\cellcolor{gray!60}$1/1/1$ \\
 \cline{2-7}
 & $\textit{m}_3$& 1/1/1 &$895.63/\textsc{to}/\textsc{to}$ &\cellcolor{gray!60}$\!149.71 / 1083.95 / 301.99\!$ &$175/1342/866$ &\cellcolor{gray!60}$5/17/2$ \\
 \cline{2-7}
 & $\textit{m}_4$& 0/0/0 &$\textsc{to}/\textsc{to}/\textsc{to}$ &\cellcolor{gray!60}$47.49 / 79.94 / 234.99$ &$1347/1452/993$ &\cellcolor{gray!60}$1/1/1$ \\
 \cline{2-7}
 & $\textit{m}_5$& 0/0/0 &$\textsc{to}/\textsc{to}/\textsc{to}$ &\cellcolor{gray!60}$47.75 / 80.33 / 223.77$ &$1353/1289/1153$ &\cellcolor{gray!60}$1/1/1$ \\
 \hline
\end{tabular}}
\end{center}
\end{table*}}

\textbf{Baselines%
.}\quad
We use pure symbolic execution as the first baseline because: (1) none of the state-of-the-art symbolic execution based verification tools can analyze
non-blocking MPI programs, \eg, CIVL~\cite{DBLP:conf/pvm/LuoZS17, DBLP:conf/sc/SiegelZLZMEDR15};
(2) MPI-SPIN~\cite{DBLP:conf/pvm/Siegel07} can support input coverage and non-blocking operations, but it requires building models of the programs manually;
and (3) other automated tools that support non-blocking operations, such as
MOPPER~\cite{forejt2014precise} and ISP~\cite{DBLP:conf/cav/VakkalankaGK08}, can only verify programs under given inputs. MPI-SV aims at covering both the input space and non-determinism automatically.
To compare with pure symbolic execution, we run \mpise\ under two configurations: (1) \baseline, \emph{i.e.}, applying only symbolic execution for path exploration, and (2) \ours, \emph{i.e.}, using model checking based boosting.
Most of the programs run with 6, 8, and 10 processes, respectively. $\texttt{DTG}$ and $\texttt{Matmat}$ can only be run with 5 and 4 processes, respectively. For $\texttt{Diffusion}$ and the programs using the  \emph{master-slave} pattern, we only run them with 4 and 6 processes due to the huge path space. We use \mpise\ to verify deadlock freedom of MPI programs and also evaluate 2 \emph{non-reachability} properties for \texttt{Integrate} and \texttt{Mandelbrot}.
The timeout is one hour. There are three possible verification results: finding a violation, no violation, or timeout. We carry out all the tasks on an Intel Xeon-based Server with 64G memory and 8 2.5GHz cores running a Ubuntu 14.04 OS. We ran each verification task three times and use the average results to alleviate the experimental errors.
To evaluate \mpise's effectiveness further, we also directly compare \mpise with
CIVL~\cite{DBLP:conf/pvm/LuoZS17, DBLP:conf/sc/SiegelZLZMEDR15} and MPI-SPIN~\cite{DBLP:conf/pvm/Siegel07}. Note that, since MPI-SPIN needs manual modeling, we only use \mpise\ to verify MPI-SPIN's C benchmarks \wrt deadlock freedom.

\subsection{Experimental Results}\label{exp-result}

Table \ref{TABLE:NEW} lists the results for evaluating \mpise\ against pure symbolic execution. The first column shows program names, and \textbf{\#Procs} is the number of running processes. \textbf{T} specifies whether the analyzed program is mutated, where $o$ denotes the original program, and $\textit{m}_i$ represents a mutant. A task comprises a program and the number of running processes. We label the programs using \emph{master-slave} pattern with superscript ``*''.
Column \textbf{Deadlock} indicates whether a task is deadlock free, where 0, 1, and -1 denote \emph{no deadlock}, \emph{deadlock} and \emph{unknown}, respectively. We use unknown for the case that both configurations fail to complete the task. Columns \textbf{Time(s)} and \textbf{\#Iterations} show the verification time and the number of explored paths, respectively, where \textsc{to} stands for timeout. The results where \ours\ performs better is in gray background.

For the 111 verification tasks, \mpise\ completes 100 tasks ($90\%$) within one hour, whereas 61 tasks ($55\%$) for \baseline. \ours\ detects deadlocks in 48 tasks, while the number of \baseline\ is 44. We manually confirmed that the detected deadlocks are real.
For the 48 tasks having deadlocks, \mpise\ on average offers a 5x speedups for detecting deadlocks.
On the other hand, \ours\ can verify deadlock freedom for 52 tasks, while only 17 tasks for \baseline. \mpise\ achieves an average 19x  speedups. Besides, compared with \baseline, \ours\ requires fewer paths to detect the deadlocks (1/55 on average) and complete the path exploration (1/205 on average). These results demonstrate \mpise's effectiveness and efficiency.

\begin{figure}[!t]
\centering
\begin{tikzpicture}
\pgfplotscreateplotcyclelist{my black white}{%
solid, every mark/.append style={solid, fill=gray}, mark=diamond*\\%
densely dashed, every mark/.append style={solid, fill=gray}, mark=square*\\%
}
\begin{axis}[
   width=5cm,
   height=4cm,
   scale only axis,
   xmin=0, xmax=60,
   xtick = {0,5,10,15,20,25,30,35,40,45,50,55,60},
   ymin=0, ymax=111,
   axis lines*=left,
   yticklabel style={font=\tiny},
   xticklabel style={font=\tiny},
   legend style ={ at={(0.45, 0.24)},
   anchor=north west, draw=black,
   fill=white,align=left,font=\tiny},
   cycle list name=my black white,
   smooth,
   xlabel={\footnotesize{Verification time thresholds}},
   ylabel={\footnotesize{\# Completed verification tasks}}
]
    \addplot coordinates{
(0,0)
(5,59)
(10,59)
(15,60)
(20,60)
(25,60)
(30,60)
(35,60)
(40,60)
(45,60)
(50,60)
(55,61)
(60,61)   };
   \addlegendentry{\baseline};
    \addplot coordinates{
(0,0)
(5,96)
(10,97)
(15,98)
(20,99)
(25,99)
(30,99)
(35,100)
(40,100)
(45,100)
(50,100)
(55,100)
(60,100)   };
   \addlegendentry{\ours};
   \end{axis}
\end{tikzpicture}%
\caption{Completed tasks under a time threshold.}\label{Trend}
\end{figure}

Figure \ref{Trend} shows the efficiency of verification for the two configurations. The X-axis varies the time threshold from 5 minutes to one hour, while the Y-axis is the number of completed verification tasks. \ours\ can complete more tasks than \baseline\ under the same time threshold, demonstrating \mpise's efficiency.
In addition, \mbox{\ours} can complete $96\ (96\%)$ tasks in $5$ minutes, which also demonstrates \mpise's effectiveness.

For some tasks, \eg, \texttt{Kfray}, \mpise does not outperform \baseline. The reasons include: (a) the paths contain hundreds of non-wildcard operations, and the corresponding CSP models are huge, and thus time-consuming to model check; (b) the number of wildcard receives or their possible matchings is very small, and as a result, only few paths are pruned.

\noindent
\quad\textbf{\emph{Comparison with CIVL}.}
CIVL uses symbolic execution to build a model for the whole program and performs model checking on the model. In contrast, \mpise\ adopts symbolic execution to generate \emph{path-level verifiable} models. CIVL does not support non-blocking operations.
We applied CIVL on our evaluation subjects. It only successfully
analyzed \texttt{DTG}. \texttt{Diffusion2d} could be analyzed after removing
unsupported external calls. MPI-SV and CIVL had similar performance on
these two programs. CIVL failed on all the remaining programs due to
compilation failures or lack of support for non-blocking operations. In contrast, \mpise\ successfully analyzed 99 of the 140 programs in CIVL's latest benchmarks. The failed ones are small API test programs for the APIs that
MPI-SV does not support. For the real-world program \texttt{floyd} that both MPI-SV and CIVL can analyze, MPI-SV verified its deadlock-freedom under 4 processes in 3 minutes, while CIVL timed out after 30 minutes. The results indicate the benefits of \mpise's\ path-level modeling.

\noindent
\quad\textbf{\emph{Comparison with MPI-SPIN}.} MPI-SPIN relies on manual modeling of MPI programs. Inconsistencies may happen between an MPI program and its model. Although prototypes exist for translating C to Promela~\cite{jiang2009using}, they are impractical for real-world MPI programs. MPI-SPIN's state space reduction treats communication channels as rendezvous ones; thus, the reduction cannot handle the programs with wildcard receives. MPI-SV leverages model checking to prune redundant paths caused by wildcard receives.
We applied \mpise\ on MPI-SPIN's 17 C benchmarks to verify deadlock freedom, and \mpise\ successfully analyzed 15 automatically, indicating the effectiveness. For the remaining two programs, \ie, \texttt{BlobFlow} and \texttt{Monte}, MPI-SV cannot analyze them due to the lack of support for APIs. For the real-world program \texttt{gausselim}, MPI-SPIN needs 171s to verify that the model is deadlock-free under 5 processes, while MPI-SV only needs 27s to verify the program automatically. If the number of the processes is 8, MPI-SPIN timed out in 30 minutes, but MPI-SV used 66s to complete verification.

\noindent
\quad\textbf{\emph{Temporal properties}.}
We specify two temporal safety properties $\varphi_1$ and $\varphi_2$ for \texttt{Integrate} and \texttt{Mandelbrot}, respectively, where $\varphi_1$ requires process one cannot receive a message before process two, and $\varphi_2$ requires process one cannot send a message before process two. Both $\varphi_1$ and $\varphi_2$ can be represented by an LTL formula
$!a$ \textbf{U} $b$, which requires event $a$ cannot happen before event $b$. We verify \texttt{Integrate} and \texttt{Mandelbrot} under 6 processes. The verification results show that \mpise\ detects the violations of $\varphi_1$ and $\varphi_2$, while pure symbolic execution fails to detect violations.

\noindent
\quad\textbf{\emph{Runtime bugs}.} \mpise can also detect local runtime bugs. During the experiments, \mpise finds 5 \emph{unknown} memory access out-of-bound bugs: 4 in \texttt{DepSolver} and 1 in \texttt{ClustalW}.

\section{Related Work}\label{relatedwork}

Dynamic analyses are widely used for analyzing MPI programs.
Debugging or testing tools \citep{TotalView, DDT, krammer2004marmot, DBLP:conf/parco/SamofalovKKZKD05,hilbrich2012mpi,DBLP:journals/cacm/LagunaASGLSBKZC15,DBLP:conf/pldi/MitraLABSG14} have better feasibility and scalability but depend on specific inputs and running schedules.
Dynamic verification techniques, \eg, ISP~\citep{DBLP:conf/cav/VakkalankaGK08} and DAMPI~\citep{vo2010scalable}, run MPI programs multiple times to cover the schedules under the same inputs. \citet{DBLP:conf/fm/BohmMJ16} propose a state-space reduction framework for the MPI program with \emph{non-deterministic synchronization}. These approaches can detect the bugs depending on specific matchings of wildcard operations, but may still miss inputs related bugs. \mpise\ supports both input and schedule coverages, and a larger scope of verifiable properties. MOPPER~\citep{forejt2014precise} encodes the deadlock detection problem under concrete inputs in a SAT equation. Similarly, 
\citet{DBLP:conf/nfm/HuangM15} use an SMT formula to reason about a trace of an MPI program for deadlock detection. However, the SMT encoding is specific for the zero-buffer mode. 
\citet{Hermes} combines dynamic and symbolic analyses to verify \emph{multi-path} MPI programs. Compared with these path reasoning work in dynamic verification, \mpise\ ensures input space coverage and can verify more properties, \ie, safety and liveness properties in LTL. Besides, \mpise\ employs CSP to enable a more expressive modeling, \eg, supporting conditional completes-before~\citep{DBLP:conf/cav/VakkalankaGK08} and master-slave pattern~\citep{Gropp:2014:UMP:2717061}.

For static methods of analyzing MPI program, \mbox{MPI-SPIN}~\citep{DBLP:conf/pvm/Siegel07,siegel:2007:vmcai} manually models MPI programs in Promela~\citep{iosif2009promela}, and verifies the model \wrt LTL properties~\citep{DBLP:books/daglib/0077033} by SPIN~\citep{holzmann1997the} (\cf Section \ref{exp-result} for empirical comparison). MPI-SPIN can also verify the consistency between an MPI program and a sequential program, which is not supported by \mpise.
\citet{DBLP:conf/cgo/Bronevetsky09} proposes parallel control flow graph (pCFG) for MPI programs to capture the interactions between arbitrary processes. But the static analysis using pCFG is hard to be automated.
ParTypes \citep{DBLP:conf/oopsla/LopezMMNSVY15} uses type checking and deductive verification to verify MPI programs against a protocol. ParTypes's verification results are sound but incomplete, and independent with the number of processes. ParTypes does not support non-deterministic or non-blocking MPI operations.
MPI-Checker~\cite{DBLP:conf/sc/DrosteKL15} is a static analysis tool built on Clang Static Analyzer~\cite{CSA}, and only supports intraprocedural analysis of local properties such as double non-blocking and missing wait. \citet{DBLP:conf/vmcai/BotbolCG17} abstract an MPI program to symbolic transducers, and obtain the reachability set based on abstract interpretation~\cite{DBLP:conf/popl/CousotC77}, which only supports blocking MPI programs and may generate false positives. 
COMPI \cite{DBLP:conf/ipps/LiLBC018, Li:2019:ECT:3302516.3307353} uses concolic testing \cite{godefroid2005dart,DBLP:conf/sigsoft/SenMA05} to detect assertion or runtime errors in MPI applications. \citet{DBLP:conf/sc/YeZS18} employs partial symbolic execution \cite{DBLP:conf/uss/RamosE15} to detect MPI usage anomalies. However, these two symbolic execution-based bug detection methods do not support the non-determinism caused by wildcard operations. \citet{DBLP:conf/sc/LuoS18} propose a preliminary deductive method for verifying the numeric properties of MPI programs in an unbounded number of processes. However, this method still needs manually provided verification conditions to prove MPI programs.


\mpise\ is related to the existing work on symbolic execution~\citep{king1976symbolic}, which has been advanced significantly during the last decade~\citep{godefroid2005dart, DBLP:conf/sigsoft/SenMA05, cadar2008klee, bucur2011parallel, DBLP:conf/ndss/GodefroidLM08, DBLP:conf/issta/PasareanuMBGLPP08, DBLP:conf/tap/TillmannH08, DBLP:conf/icse/Wang0CZWL18, zhang2015regular}. Many methods have been proposed to prune paths during symbolic execution~\citep{boonstoppel2008rwset, jaffar2013boosting, DBLP:conf/asplos/CuiHWY13, guo2015assertion, DBLP:conf/icse/YuCWS018}. The basic idea is to use the techniques such as slicing~\citep{DBLP:conf/pldi/JhalaM05} and interpolation~\citep{DBLP:conf/tacas/McMillan05} to safely prune the paths. Compared with them, \mpise\ only prunes the paths of the same path constraint but different message matchings or operation interleavings. 
\mpise\ is also related to the work of automatically extracting session types \cite{DBLP:conf/cc/NgY16}  or behavioral types \cite{DBLP:conf/icse/LangeNTY18} for Go programs and verifying the extracted type models. These methods extract over-approximation models from Go programs, and hence are sound but incomplete. Compared with them, \mpise\ extracts path-level models for verification. Furthermore, there exists work of combining symbolic execution and model checking~\citep{nori2009yogi, su2015combining, daca2016abstraction}. YOGI~\citep{nori2009yogi} and Abstraction-driven concolic testing~\citep{daca2016abstraction} combine dynamic symbolic execution ~\citep{godefroid2005dart,DBLP:conf/sigsoft/SenMA05} with counterexample-guided abstraction refinement (CEGAR)~\citep{clarke2000counterexample}. \mpise\ focuses on parallel programs, and the verified models are path-level. \mpise is also related to the work of unbounded verification for parallel programs \citep{DBLP:conf/popl/BouajjaniE12,DBLP:conf/cav/BouajjaniEJQ18,DBLP:journals/pacmpl/BakstGKJ17,DBLP:journals/pacmpl/GleissenthallKB19}. Compared with them, \mpise is a bounded verification tool and supports the verification of LTL properties. Besides, \mpise\ is related to the existing work of testing and verification of shared-memory programs \citep{DBLP:conf/osdi/MusuvathiQBBNN08,guo2015assertion,DBLP:conf/sigsoft/GuoWW18,DBLP:conf/oopsla/DemskyL15,DBLP:conf/pldi/HuangZD13,DBLP:conf/oopsla/Huang016,DBLP:journals/fmsd/ChakiCGOSY04,DBLP:conf/tacas/CimattiNR11,DBLP:conf/concur/KraglQH18,DBLP:conf/kbse/InversoN0TP15,DBLP:conf/tacas/Yin0LLW18}. Compared with them, \mpise concentrates on message-passing programs. Utilizing the ideas in these work for analyzing MPI programs is interesting and left to the future work.

\section{Conclusion}\label{conclusion}

We have presented \mpise\ for verifying MPI programs with both non-blocking and non-deterministic operations. By synergistically combining symbolic execution and model checking, \mpise\ provides a general framework for verifying MPI programs. We have implemented \mpise\ and extensively evaluated it on real-world MPI programs. The experimental results are promising demonstrate \mpise's effectiveness and efficiency. The future work lies in several directions: (1) enhance MPI-SV to support more MPI operations, (2) investigate the automated performance tuning of MPI programs based on MPI-SV, (3) apply our synergistic framework to other message-passing programs.

%


\bibliography{main}

\ifdefined\supplementary 
\appendix
\section{Appendix}
\subsection{Semantics of the Core MPI Language}\label{mpi_semantics}

\textbf{Auxiliary Definitions.}\quad Before giving the MPI language's semantics, we give some auxiliary definitions. Given an MPI program $\mathcal{MP} = \{\textsf{Proc}_i \mid 0 \le i \le n \}$, $\mathit{send(dst)}$ and $\mathit{recv(src)}$ denote $\mathcal{MP}$'s send and receive operations\footnote{$\mathit{send(dst)}$ and $\mathit{recv(src)}$ can denote both blocking and unblocking operations, and we omit the $req$ parameter for non-blocking ones for the sake of simplicity.}, respectively, where $dst{\in}\{0,\dots,n\}$ and $src{\in}\{0,\dots,n\}{\cup}\{*\}$. $\mathsf{op}(\mathcal{MP})$ represents the set of all the MPI operations in $\mathcal{MP}$, $\mathit{rank}(\alpha)$ is the process identifier of operation $\alpha$, and $\mathit{isBlocking}(\alpha)$ indicates whether $\alpha$ is a blocking operation.

\begin{figure*}
$\frac{s_i.\mathcal{F}=\mathsf{active}}
{ S\xrightarrow{\issue{s_i.Stat}}(\dots,s_i[update(\mathcal{F},s_i.Stat), \mathcal{B}.push(s_i.Stat)],\dots)}${\small$\langle \mathsf{ISSUE}\rangle$}\hspace{2mm}
$\frac{\forall i\in [0,n],s_i.\mathcal{F}=\mathsf{blocked}\land (\exists \alpha \in s_i.\mathcal{B}, \alpha =\texttt{Barrier})}
{ S\xrightarrow{B}(s_0[update(\mathcal{F},\alpha),\mathcal{B}.pull(\alpha)], ..., s_n[update(\mathcal{F},\alpha),\mathcal{B}.pull(\alpha)])}${\small$\langle \mathsf{B}\rangle$}

\vspace{4mm}

$\frac{\exists\alpha\in s_i.\mathcal{B},\exists\beta\in s_j.\mathcal{B}, ready(\alpha,s_i)\land ready(\beta,s_j)\land match(\alpha,\beta)\land C(\alpha,s_i,\beta,s_j) }
{ S\xrightarrow{SR/SR^*}(\dots,
s_i[update(\mathcal{F},\alpha),\mathcal{B}.pull(\alpha)],\dots,s_j[update(\mathcal{F},\beta),\mathcal{B}.pull(\beta)],\dots)}${\small$\langle \mathsf{SR}\rangle$}
$\quad\quad$\hspace{2mm}
$\frac{s_i.\mathcal{F}=\mathsf{blocked}\land \exists\alpha\in s_i.\mathcal{B},(\alpha=\texttt{Wait}\land ready(\alpha,s_i))}
{ S\xrightarrow{W}(\dots,s_i[update(\mathcal{F},\alpha),\mathcal{B}.pull(\alpha)],\dots)}${\small$\langle \mathsf{W}\rangle$}
\caption{Transition Rules of MPI operations}\label{rules}
\end{figure*}

\begin{defn}\label{procee-def}
\textbf{MPI Process State.} An MPI process's state
is a tuple $(\mathcal{M}, \mathit{Stat}, \mathcal{F},\mathcal{B}, \mathcal{R})$, where $\mathcal{M}$ maps a variable to its value,
$\mathit{Stat}$ is the next program statement to execute,
$\mathcal{F}$ is the flag of process status and belongs to the set \{$\mathsf{active}$, $\mathsf{blocked}$, $\mathsf{terminated}$\}, $\mathcal{B}$ and $\mathcal{R}$ are infinite buffers to store the issued MPI operations not yet matched and the matched MPI operations, respectively.
\end{defn}

An element $elem$ of a process state $s$ can be accessed by $s.elem$, \eg, $s.\mathcal{F}$ is the status of $s$. The behavior of a process can be regarded as a sequence of statements, and we use $index(\alpha)$ to denote the index of operation $\alpha$ in the sequence. An MPI program's global state $S$ is composed by the states of the MPI processes, \ie, $S=(s_0,\dots,s_{n})$. An MPI program's semantics is a labeled transition system defined below.

\begin{defn}
\textbf{Labeled Transition System.} A labeled transition system (LTS) of an MPI program $\mathcal{MP}$ is a quadruple $(\mathcal{G},\Sigma,\rightarrow,\mathcal{G}_0)$, where $\mathcal{G}$ is the set of global states, $\Sigma$ denotes the set of actions defined below, $\rightarrow \subseteq \mathcal{G}\times\Sigma\times \mathcal{G}$ represents the set of transitions, and $\mathcal{G}_0$ is the set of initial states.
\end{defn}

\textbf{Actions.}\quad The action set $\Sigma$ is $\{B, W, \mathit{SR}, \mathit{SR}^*\}\cup\{\issue{o} \mid o\in \mathsf{op}(\mathcal{MP})\}$, where $B$ represents the synchronization of \verb"Barrier", $W$ is the execution of \verb"Wait" operation, $SR$ denotes the matching of message send and deterministic receive, $SR^*$ represents the matching of message send and wildcard receive, and $\issue{o}$ stands for the issue of operation $o$.

\textbf{Transition Rules.}\quad We first give some definitions used by the transition rules. We use $ready(\alpha,s_i) \in \{\mathsf{True}, \mathsf{False}\}$ defined as follows to indicate whether operation $\alpha$ is ready to be matched in state $s_i$ \wrt the MPI standard \cite{MPI}, where $\beta \in s_i.\mathcal{B}$ represents that operation $\beta$ is in the buffer $\mathcal{B}$ of process state $s_i$ and $k\in\{0,\dots,n\}$.
\begin{itemize}[leftmargin=2em]
\item
If $\alpha$ is $\verb"Wait(r)"$, $\alpha$ can be matched if the waited operation has been matched, \ie, $\exists\beta{\in}\mathcal{R},(\beta{=}$ $\verb"ISend(k,r)"\vee\beta {=} \verb"IRecv(k,r)"\vee\beta {=} \verb"IRecv(*,r)")$.
\item
If $\alpha$ is $send(k)$, $\alpha$ can be matched if there is no previously issued $send(k)$ not yet matched, \ie, $\neg(\exists \beta{\in} s_i.\mathcal{B},index(\beta)$\\$ < index(\alpha)\land\beta=send(k))$.
\item
If $\alpha$ is $recv(k)$, $\alpha$ can be matched if the previously issued $recv(k)$ or $recv(*)$ has been matched, which can be formalized as $\neg(\exists\beta{\in} s_i.\mathcal{B}, \ index(\beta) <index(\alpha)\land(\beta=recv(k)\!\lor\beta = recv(*)))$.
\item
If $\alpha$ is $recv(*)$, $\alpha$ can be matched if
the previously issued $recv(*)$ has been matched, \ie, $\neg(\exists\beta{\in} s_i.\mathcal{B},\ index(\beta) <index(\alpha)\land\beta=recv(*))$. It is worth noting the \textit{conditional completes-before} pattern~\cite{DBLP:conf/cav/VakkalankaGK08}, \ie, operation $\verb"IRecv(k,r)"$ followed by a $recv(*)$, and the $recv(*)$ can complete first when the matched message is not from $k$. We will give a condition later to ensure such relation.
\end{itemize}

Suppose $\alpha$ and $\beta$ are MPI operations, $match(\alpha,\beta)$ represents whether $\alpha$ and $\beta$ can be matched statically, and can be defined as  $match^{'}(\alpha,\beta)\lor match^{'}(\beta,\alpha)$, where $match^{'}(\alpha,\beta)$ is
$((\alpha,\beta){=}(send(dst),$\\$recv(src))\land(dst{=}src\lor src=*)$. We use $s_i[ops]$ to denote the updates of the process state $s_i$ with an update operation sequence $ops$. The operation $update(\mathcal{F},\alpha)$ updates the process status \wrt $\alpha$ as follows.
$$
update(\mathcal{F}\!,\!\alpha)\!\!=\!\!
\left\{
\begin{array}{ll}
\!\!\mathcal{F}{:=}\mathsf{active} \!\!&\!\! \mathcal{F}{=}\mathsf{blocked} \land \mathit{isBlocking}(\alpha) \\
\!\!\mathcal{F}{:=}\mathsf{blocked} \!\!&\!\! \mathcal{F}{=}\mathsf{active} \land \mathit{isBlocking}(\alpha) \\
\!\!\mathcal{F}{:=}\mathcal{F}\!\! &\!\! \mathit{otherwise}
\end{array}\right.
$$
$\mathcal{B}.push(\alpha)$ represents adding MPI operation $\alpha$ to buffer $\mathcal{B}$, while $\mathcal{B}.pull(\alpha)$ represents removing $\alpha$ from $\mathcal{B}$ and adding $\alpha$ to $\mathcal{R}$.
We use $Stat'$ to denote the statement next to $Stat$. We use $C(\alpha,s_i,\beta,s_j)$ = $C_1(\alpha,s_i,\beta,s_j) \wedge C_1(\beta,s_j,\alpha,s_i)$ to define the conditional completes-before relation requirement, where $C_1(\alpha,s_i,\beta,s_j)$ is $\neg(\exists\beta' \in s_j.\mathcal{B},$\\$\ (\beta{=}\verb"IRecv(*,r)"\land\beta'{=}\verb"IRecv(i,r')"\land ready(\beta',s_j)\land match(\alpha,\beta')))$.

Figure \ref{rules} shows four transition rules for MPI operations. For the sake of brevity, we omit the transition rules of local statements, which only update the mapping $\mathcal{M}$ and the next statement $Stat$ to execute. Rule $\langle \mathsf{ISSUE}\rangle$ describes the transition of issuing an MPI operation, which requires the issuing process to be $\mathsf{active}$. After issuing the operation, the process status is updated, the next statement to execute becomes $Stat'$ (omitted for the sake of spaces), and the issued operation is added to the buffer $\mathcal{B}$. Rule $\langle \mathsf{SR}\rangle$ is about matchings of message send and receive. There are three required conditions to match a send to a receive: (1) both of them have been issued to the buffer $\mathcal{B}$ and are ready to be matched; (2) operation arguments are matched, \ie, $match(\alpha,\beta)$; (3) they comply with the conditional completes-before relation, \ie, $C(\alpha,s_i,\beta,s_j)$. After matching, the matched operations will be removed from buffer $\mathcal{B}$ and added to buffer $\mathcal{R}$, and the process status is updated. Rule $\langle \mathsf{B}\rangle$ is for barrier synchronization, which requires that all the processes have been blocked at the \verb"Barrier". After barrier synchronization, operation \verb"Barrier" will be moved from buffer $\mathcal{B}$ to buffer $\mathcal{R}$ and all the processes become $\mathsf{active}$. Rule $\langle \mathsf{W}\rangle$ is for \verb"Wait" operation, which requires the corresponding non-blocking operation has been finished. After executing \verb"Wait" operation, the process becomes $\mathsf{active}$ and the \verb"Wait" operation will be removed from buffer $\mathcal{B}$ and added to buffer $\mathcal{R}$.

\subsection{Correctness of Symbolic Execution for MPI Programs}\label{correctness}

Round-robin schedule and \lazy  symbolic execution is an instance of model checking with POR preserving reachability properties. Next, we prove the correctness of symbolic execution method for verifying reachability properties.

\begin{defn}\label{def-property}\textbf{Reachability Property.} A reachability property $\varphi$ of an MPI program $\mathcal{MP}$ can be defined as follows, where $\mathbf{assertion}(S)$ represents an assertion of global state $S$, \eg, deadlock and assertions of variables.$$
\begin{array}{lll}
\gamma & ::= & \mathbf{true} \mid \gamma \vee \gamma \mid \neg\gamma \mid \mathbf{assertion}(S) \\
\varphi & ::= & \mathbf{EF}\ \gamma \mid \neg\varphi
\end{array}
$$$\mathbf{EF}\ \gamma$ returns true iff there exists an $\mathcal{MP}$'s state that satisfies the formula $\gamma$.
\end{defn}

We use $S\xrightarrow{a}S'$ to represent the transition $(S, a, S')$ in the transition set and $\enable{S}$ to denote the set of enabled actions at global state $S$, \ie, $\enable{S} {=} \{a \mid \exists S'\in \mathcal{G}, S\xrightarrow{a}S'\}$. If $S\xrightarrow{a}S'$, we use $a(S)$ to represent $S'$.
Instead of exploring all the possible states, MPI-SV only executes a subset of $\enable{S}$ (denoted as $E(S)$) when reaching a state $S$. According to the workflow of MPI symbolic execution, we define $E(S)$ below, where
\begin{itemize}[leftmargin=2em]
\item $\mathit{minIssue}(S) = \mathsf{min}\{rank(o) \mid \issue{o}\in \enable{S}\}$ is the minimum process identifier that can issue an MPI operation.
\item $\mathit{minRank}(S) = \mathsf{min}\{rank_a(b) \mid b \in \enable{S} \land b \neq SR^{*} \}$ is the minimum process identifier of enabled non-wildcard actions.
When $b$ is $SR$, $rank_a(b)$ is the process identifier of the send; when $b$ is $W$, $rank_a(b)$ is the process identifier of the \verb"Wait" operation.
\end{itemize}
{\small$$
E(S){=}\left\{\hspace{-2mm}
\begin{array}{ll}
\{\issue{o}\} \!\!\!& if\ \issue{o}{\in} enabled(S) {\land} rank(o) {=}\mathit{minIssue}(S) \\
\{B\} \!\!\!& if\ B{\in} enabled(S) \\
\{W\} \!\!\!& if\ W{\in} enabled(S){\land} rank_a(W){=}\mathit{minRank}(S) \\
\{SR\} \!\!\!& if\ SR{\in} enabled(S){\land} rank_a(SR){=}\mathit{minRank}(S) \\
enabled(S) \!\!\!& otherwise \\
\end{array}\right.
$$}
When $issue(o)$ is enabled in state $S$, we will select the enabled issue operation having the smallest process identifier as $E(S)$, which is in accordance with round-robin schedule. Remember that \lazy matching delays the matching of wildcard receive ($SR^*$) as late as possible. If $B$ is enabled, we use $\{B\}$ as $E(S)$; else, we will use the deterministic matching ($W$ or $SR$) having smallest process identifier (we  select $W$ in case $rank_a(W){=}rank_a(SR)$); otherwise, we use the enabled set as $E(S)$, \ie, $\forall a {\in} \enable{S},\ a{=}SR^*$.

\begin{defn}
\textbf{Independence Relation ($\mathcal{I}\subseteq \Sigma\times\Sigma$).}
For $S\in \mathcal{G}$ and $(a, b)\in \mathcal{I}$, $\mathcal{I}$ is a binary relation that if $(a, b)\in enabled(S)$ then
$a \in enabled(b(S))$, $b \in enabled(a(S))$, and $a(b(S))=b(a(S))$.
\end{defn}

The dependence relation $\mathcal{D}\subseteq\Sigma\times\Sigma$ is the complement of $\mathcal{I}$, \ie, $(a,b)\in\mathcal{D}$, if $(a,b)\notin\mathcal{I}$.
Given an MPI program $\mathcal{MP}$, whose LTS model is $M = (\mathcal{G},\Sigma,\rightarrow,\mathcal{G}_0)$, an execution trace $T = \langle a_0, \dots, a_n \rangle \in \Sigma^*$ of $\mathcal{MP}$ is a sequence of actions, such that $\exists S_i, S_{i+1}\in \mathcal{G}, S_i\xrightarrow{a_i}S_{i+1}$ for each $i\in [0, n]$ and $S_0 \in \mathcal{G}_0$. We use $\mathsf{r}(T)$ to represent the result state of $T$, \ie, $S_{n+1}$, and $T\in M$ to represent $T$ is an execution trace of $\mathcal{MP}$.

\begin{defn}
Execution equivalence $\equiv\;\subseteq\Sigma^*\times\Sigma^*$ is a reflexive and symmetric binary relation  such that (1) $\forall\ a,b\in\Sigma,\ (a,b)\in\mathcal{I}\Rightarrow \langle a, b\rangle\equiv \langle b, a\rangle$. (2) $\forall\ T,T'\in\Sigma^*,\ T\equiv T'$ if there exists a sequence $\langle T_0,\dots,T_k\rangle$ that $T_0{=}T,\ T_k{=}T'$, and for every $i{<}k,\ T_i=u\cdot\langle a, b\rangle \cdot v,\ T_{i+1}{=}u\cdot\langle b, a\rangle \cdot v$, where $(a,b){\in}\mathcal{I}$, $u, v{\in} \Sigma^*$, and $u\cdot v$ is the concatenation of $u$ and $v$.
\end{defn}

Given an MPI program $\mathcal{MP}$, suppose $\mathcal{MP}$'s semantic model is $M$ and the $E(S)$ based model is $M'$, once $E(S)$ satisfies conditions $\mathbf{C1}$ and $\mathbf{C2}$, then for every execution trace $T$, if
$T{\in} M{\land} T{\not\in} M'$, there exists an execution trace $T'$ in $M'$ that $T{\equiv} T'$~\citep{DBLP:journals/fuin/PenczekSGK00}.

$\mathbf{C1.}$\quad $\forall$ $a\in E(S)$, if $(a,b)\in \mathcal{D}$, then for every trace
$S_0 \xrightarrow{a_0} S_1\xrightarrow{a_1},\dots,\xrightarrow{a_k} S_k\xrightarrow{\beta} S_{i+1}$, there exists $a_i\in E(S)$,
where $0\leq i\leq k$.

$\mathbf{C2.}$\quad On every cycle in $M'$, there exists at least one node $S$ that $E(S)=\enable{S}$.

Before proving our selection of $E(S)$ satisfies $\mathbf{C1}$ and $\mathbf{C2}$, we first give a theorem to show the independence relation of co-enabled actions.

\begin{thm}\label{thm:independence}
Given action $a\in\{\issue{o},B,W,SR\}$ and action $b$, for $S\in \mathcal{G}$, if $a,b\in \enable{S}$, then $(a,b)\in\mathcal{I}$.
\end{thm}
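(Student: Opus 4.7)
The plan is to do a case analysis on the action $a \in \{\issue{o}, B, W, SR\}$ and, for each case, dispatch every possible co-enabled $b \in \enable{S}$. In every case I would verify the three conjuncts of the independence definition: $a \in \enable{b(S)}$, $b \in \enable{a(S)}$, and $a(b(S)) = b(a(S))$. The unifying idea is to localize each action to its ``footprint'' of process states (the components of the tuple $(\mathcal{M}, \mathit{Stat}, \mathcal{F}, \mathcal{B}, \mathcal{R})$ it touches), and then argue that when footprints are on different processes the result is immediate, and when they overlap on the same process the MPI semantic side-conditions (the $ready$ predicate and the conditional completes-before $C$) force the effects to commute.

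\textbf{Easy cases.} If $a = \issue{o}$ with $o$ in process $i$, then $s_i.\mathcal{F} = \mathsf{active}$, so $b$ cannot be $B$ (which needs every process blocked) and cannot be a $W$ on $i$. For $b = \issue{o'}$, $b$ must live on $j \neq i$ (a process has a single next statement), so the actions touch disjoint process tuples. For $b \in \{SR, SR^*\}$, any operation of $i$ used by $b$ must be a non-blocking op in $s_i.\mathcal{B}$ (since $i$ is active), and matching it does not flip $s_i.\mathcal{F}$, so $\issue{o}$ stays enabled; conversely, $\issue{o}$ only appends to the tail of $s_i.\mathcal{B}$, preserving the indices and readiness of all earlier entries used by $b$. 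Commutativity then follows because $\mathcal{B}.push$ and $\mathcal{B}.pull$ at distinct indices commute. The case $a = W$ is handled similarly: $W$'s corresponding non-blocking op is already in $\mathcal{R}$, hence invisible to any co-enabled $SR$/$SR^*$, and the only state it mutates is $s_i.\mathcal{F}, s_i.\mathcal{B}, s_i.\mathcal{R}$. For $a = B$, co-enablement forces every process to be blocked with Barrier as its blocking op, which rules out $\issue{o}$ and $W$; a co-enabled $SR$/$SR^*$ can then only match non-blocking ops still sitting in the buffers, so the $\mathcal{F}$ flags stay blocked after the match and $B$ remains firable, while $B$ itself leaves those pending non-blocking ops and their readiness untouched.

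\textbf{Main obstacle: $a = SR$ versus $b = SR^*$.} The delicate case is when both a deterministic match $SR(\alpha,\beta)$ with $\alpha$ a send in $i$ and $\beta = \texttt{IRecv(i,r)}$ in $j$, and a wildcard match $SR^*(\gamma,\delta)$ with $\delta = \texttt{IRecv(*,r')}$, are co-enabled. If $\gamma \neq \alpha$ or $\delta$ lives in a process different from $j$, or the two sends point to different destinations, a routine argument on buffer indices and the $ready$ predicate shows that removing one pair does not affect the readiness of the other pair, and the state updates commute. The hard sub-case is when the wildcard $\delta$ could also claim $\alpha$, i.e., $\alpha$ is destined for the same process $j$ that hosts $\delta$; I expect this to be the main obstacle, and the key step is to show it cannot actually arise. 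The conditional completes-before $C_1(\alpha, s_i, \delta, s_j)$ required for $SR^*$ is precisely $\neg(\exists\beta' \in s_j.\mathcal{B},\ \delta = \texttt{IRecv(*,r')} \land \beta' = \texttt{IRecv(i,r'')} \land ready(\beta', s_j) \land match(\alpha, \beta'))$; but the hypothesized $\beta$ witnesses this existential, so $C_1$ fails and $SR^*(\alpha,\delta)$ was not enabled after all. Hence only the non-conflicting sub-cases remain, and they commute by the footprint argument. Combining all cases gives $(a,b) \in \mathcal{I}$, completing the theorem.
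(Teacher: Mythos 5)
Your proof is correct and follows the same overall strategy as the paper's: a case analysis on co-enabled pairs, arguing that each action's effect is local to its own process components and that buffer pushes/pulls at distinct positions commute, with readiness preserved because $ready$ only quantifies over earlier-indexed unmatched operations. The difference is one of completeness rather than of method: the paper writes out only the case $a=\issue{o}$ (splitting on $b\in\{\issue{o'},W,SR,SR^*\}$ and on whether the processes coincide) and dismisses the cases $a\in\{B,W,SR\}$ as ``similar.'' That dismissal hides the one sub-case that is \emph{not} merely similar, namely $a=SR(\alpha,\beta)$ against $b=SR^*(\gamma,\delta)$ when the wildcard $\delta$ sits in the same destination process as the deterministic receive $\beta$ and could claim the same send $\alpha$. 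You correctly identify this as the crux and resolve it the right way: the ready deterministic receive $\beta=\texttt{IRecv(i,r)}$ is exactly a witness to the existential negated in $C_1(\alpha,s_i,\delta,s_j)$, so the conflicting $SR^*$ is not enabled at $S$ in the first place, and only genuinely disjoint matches survive, which commute by the footprint argument. This is the intended role of the conditional completes-before side condition, and making it explicit strengthens the paper's argument; your treatment of the remaining cases (monotonicity of $ready$ under removal from $\mathcal{B}$, vacuity of $C_1$ for non-wildcard receives, non-overtaking ruling out two co-enabled deterministic matches sharing an operation) is consistent with the paper's rules in Figure~\ref{rules}.
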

\begin{proof}
We only prove the case that $ a$ is $\issue{o}$. The proofs for the other three cases are similar. When $ a{=}\issue{o}$ and $ a, b{\in} \enable{S}$, $ b$ can be $\issue{o'}$, $W$, $SR$, or $SR^*$ \wrt the transition rules.

$(1)\  b{=}\issue{o'}.$
Suppose $rank(o){=}i$, $rank(o'){=}j$, then $i{\neq} j$. Since an issued operation can only block its process, we can have $b \in enabled(a(S))$ and $ a \in enabled(b(S))$. In addition, {\small$a(b(S)) =  b(a(S)) =$\\$(\dots, s_i[update(\mathcal{F},o),Stat{:=}Stat', \mathcal{B}.push(o)],\!\dots,s_j[update(\mathcal{F},\!o'),$}\\{\small$Stat{:=}Stat',\mathcal{B}.push(o')],\!\dots)$}. Hence, $(a, b){\in}\mathcal{I}$.

$(2)\  b{=}W.$
Suppose $rank(o){=}i$, $rank_a(W){=}j$, then $i{\neq} j$. Since $\issue{o}$ can only block process $i$ and $W$ can only make process $j$ active, $ b{\in} enabled( a(S))$ and $ a{\in} enabled(b(S))$.
In addition, {\small$a(b(S)){=}b(a(S)){=}$\\$(\dots,s_i[update(\mathcal{F},o),Stat{:=}Stat',\mathcal{B}.push(o)],{\dots},s_j[update(\mathcal{F},$\\$\texttt{Wait}),\mathcal{B}.pull(\texttt{Wait})],{\dots})$}. Hence, $(a, b){\in}\mathcal{I}$.

$(3)\  b {=}SR{\lor}b{=}SR^*.$
Suppose $b{=}(p,q)$, $rank(o){=}i$, $rank(p){=}j_1$, and $rank(q){=}j_2$. For $i{\neq} j_1{\land} i{\neq} j_2$: $\issue{o}$ only updates the state $s_i$, and $ready(p,s_{j_1})$ and $ready(q,s_{j_2})$ will not be affected. On the other hand, $ b$ cannot make process $i$ blocked. Hence $b{\in} enabled(a(S))$ and $a{\in} enabled( b(S))$. Then,
{\small$a(b(S)){=} b(a(S)){=}({\dots},s_i[update(\mathcal{F}{,}o),Stat$\\${:=}Stat',\mathcal{B}.push(o)],\dots,s_{j_1}[update(\mathcal{F},p),\mathcal{B}.pull(p)],{\dots},$\\
$ s_{j_2}[update(\mathcal{F},q),\mathcal{B}.pull(q)],{\dots})$}. For
$i{=}j_1$: since $\issue{o}$ and $ b$ are co-enabled, $index(o){>}index(p)$ and $p$ is non-blocking. Due to the condition $index(o){>}index(p)$, $\issue{o}$ has no effect on $ready(p,s_{j_1})$. On the other hand, since $p$ is non-blocking, $b$ cannot make process i blocked. Hence $ b{\in} enabled(a(S))$ and $a{\in} enabled(b(S))$. Additionally, since $p$ is non-blocking, $a(b(S)){=}b( a(S)){=}(\dots,s_i[update(\mathcal{F},o),Stat$\\${:=}Stat', \mathcal{B}.pull(p),\mathcal{B}.push(o)],\!\dots\!,s_{j_2}[update(\mathcal{F},q),\mathcal{B}.pull(q)],\!\dots)$. For $i{=}j_2$, the proof is similar. Hence $(a, b){\in}\mathcal{I}$.
\end{proof}

\begin{thm}\label{thm:correctness}
$E(S)$ preserves the satisfaction of global reachability properties.
\end{thm}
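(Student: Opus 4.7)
The plan is to reduce the theorem to verifying that our chosen $E(S)$ satisfies the two partial-order-reduction conditions $\mathbf{C1}$ and $\mathbf{C2}$ already stated in the paper. Once this is done, the cited result of \citet{DBLP:journals/fuin/PenczekSGK00} gives that every trace $T \in M \setminus M'$ has an execution-equivalent counterpart $T' \in M'$. Execution equivalence is generated by swapping adjacent independent actions, and the independence diamond $a(b(S)) = b(a(S))$ implies by induction on the length of the rewriting sequence that equivalent traces from the same start state end in the same global state. Since a reachability formula $\mathbf{EF}\,\gamma$ is determined solely by the set of reachable global states (and Boolean closure), satisfaction is preserved, proving the theorem.

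The verification of $\mathbf{C1}$ is the clean part and reduces directly to Theorem~\ref{thm:independence}. In the first four cases of the definition $E(S) = \{a\}$ is a singleton with $a \in \{\issue{o}, B, W, SR\}$, while in the fifth case $E(S) = \enable{S}$ and $\mathbf{C1}$ holds trivially. For the singleton cases I would argue by contradiction: suppose there is a trace $S \xrightarrow{a_0} S_1 \xrightarrow{a_1} \cdots \xrightarrow{a_{k-1}} S_k \xrightarrow{\beta} S_{k+1}$ with $(a,\beta)\in\mathcal{D}$ and no $a_j$ equal to $a$. A straightforward induction on $i$ shows $a \in \enable{S_i}$: at each step $a$ and $a_{i-1}$ are co-enabled at $S_{i-1}$, so Theorem~\ref{thm:independence} gives $(a,a_{i-1})\in\mathcal{I}$, which preserves the enabledness of $a$ across the transition. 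Consequently $a$ and $\beta$ are co-enabled at $S_k$, and one more application of Theorem~\ref{thm:independence} forces $(a,\beta)\in\mathcal{I}$, contradicting $(a,\beta)\in\mathcal{D}$.

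The condition $\mathbf{C2}$ requires that every cycle in $M'$ contain a node where $E(S)=\enable{S}$, i.e., a case-five node. My plan is to exhibit a lexicographic well-founded measure on global states that strictly decreases along any transition selected in cases 1--4, so that no pure case-1--4 loop can close. A natural candidate is $(L,|\mathcal{B}_{\text{tot}}|,|\mathcal{R}_{\text{tot}}|)$ where $L$ is a program-counter-based progress measure that treats loop iterations as advancing (by instrumenting loop-bound witnesses tracked in $\mathcal{M}$), and the buffer components track matched versus issued operations; $\issue{o}$ increases one component while $B$, $W$, $SR$ strictly reduce another in a lexicographically compatible way. Any cycle must therefore contain a transition outside cases 1--4, which in our definition of $E(S)$ can only be a case-five state.

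The main obstacle will be making the $\mathbf{C2}$ measure genuinely well-founded in the presence of MPI loops, because the naive program-counter component does cycle. I expect the right fix to be either (i) to appeal to the bounded symbolic execution setting, where the worklist-generated LTS is effectively a finite tree and cycles in $M'$ are absent, making $\mathbf{C2}$ vacuous, or (ii) to strengthen the measure using the symbolic path condition (which accumulates monotonically along symbolic execution), so that distinct loop iterations correspond to distinct $M'$-states. The independence part of the proof is routine given Theorem~\ref{thm:independence}; the subtlety of $\mathbf{C2}$ is where most of the care will go.
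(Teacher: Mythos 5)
Your proposal follows the same route as the paper's proof: verify conditions $\mathbf{C1}$ and $\mathbf{C2}$, invoke the cited POR result to obtain an execution-equivalent trace $T'\in M'$ for every $T\in M$, and observe that commuting independent actions cannot change the result state, so $\mathbf{EF}\,\gamma$ (and its negation, using $M'\subseteq M$) is preserved. Your $\mathbf{C1}$ argument is essentially identical to the paper's: the case $E(S)=\enable{S}$ is trivial, and in the singleton cases Theorem~\ref{thm:independence} propagates enabledness of $a$ along the trace and then forces $(a,\beta)\in\mathcal{I}$, contradicting dependence.

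The one place you diverge is $\mathbf{C2}$, where you invest most of your effort in constructing a lexicographic well-founded measure and correctly diagnose that a program-counter-based component will not be well-founded in the presence of loops. The paper dispenses with $\mathbf{C2}$ in one line: the labeled transition system of an MPI program (as explored by the symbolic execution) has no cycles, so the condition holds vacuously. This is exactly your option (i), and it is the right resolution --- you should promote it from fallback to the actual argument and drop the measure construction, which buys you nothing here and would require care you have not supplied (e.g., the interaction of $\issue{o}$ increasing one buffer component while only some actions decrease others). With that simplification your proof matches the paper's.
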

\begin{proof}
We first prove the $E(S)$ satisfies condition $\mathbf{C1}$ and $\mathbf{C2}$, respectively.

\textbf{C1:} $\forall$ $ a\in E(S)$, if $( a, b)\in \mathcal{D}$, then for every trace
$S_0 \xrightarrow{a_0} S_1\xrightarrow{a_1},\dots,\xrightarrow{a_k} S_k\xrightarrow{ b} S_{i+1}$, there exists $a_i\in E(S)$,
where $0\leq i\leq k$.\quad

\textsf{Case 1:}\quad $E(S)=\enable{S}$, \textbf{C1} holds because $a_0\in E(S)$.

\textsf{Case 2:}\quad $E(S)\neq \enable{S}$. In this case, $E(S)$ contains only one element, and can be $\issue{o}$, $B$, $W$, or $SR$.
Assume \textbf{C1} does not hold, \ie, $a_i\neq a$.
According to Theorem~\ref{thm:independence}, $( a, a_i)\in\mathcal{I}$ and $ a\in enabled(S_i)$. Because $ a, b\in enabled(S_k)$, $( a, b)\in\mathcal{I}$,
which conflicts with the premise that $( a, b)\in \mathcal{D}$. Hence \textbf{C1} holds.

\textbf{C2}: Since there is no cycle in the labeled transition system of MPI programs, \textbf{C2} holds.

Since $E(S)\subseteq \enable{S}$, $M'$ is a sub model of $M$. Assume $E(S)$ does not preserve the satisfaction of global reachability property $\varphi$. According to Definition \ref{def-property}, if $\varphi$ is $\mathbf{EF}\ \gamma$, \ie, there exists a state $S$ in $M$ that satisfies $\gamma$, but no state in $M'$ satisfying $\gamma$. Since $E(S)$ satisfies \textbf{C1} and \textbf{C2}, suppose the execution trace to $S$ is $T$, \ie, $\mathsf{r}(T) = S$, then there must exist an equivalent execution trace $T'\in M'$. Obviously, commuting independent actions cannot change the result state, so $\mathsf{r}(T')=S$, which conflicts with the assumption. If $\varphi$ is $\neg \mathbf{EF}\ \gamma$, \ie, each state in $M$ does not satisfy $\gamma$, but there exists a state in $M'$ satisfying $\gamma$, which conflicts with $M'$ is a sub model of $M$. Hence the theorem holds.

\end{proof}


\subsection{Support of Master-Slave Pattern}\label{master-slave-support}

In real-world MPI programs, communications may depend on message contents, which makes the behavior more complicated. For example, suppose a message content is stored in variable $x$, followed by ``$\textbf{if}\ (x > 10)\ \verb"Send(0)"; \textbf{else}\ \verb"Send(1)"$". The general way to handle the situation where communications depend on the message contents is to make the contents symbolic. \mpise make $x$ to be symbolic once it detects that there exist communications depending on $x$, so that \mpise\ does not miss a branch. Master-slave pattern is a representative situation and has been widely adopted to achieve a dynamic load balancing~\cite{gropp1999using}. The verification of the parallel programs employing dynamic load balancing is a common challenge. In the pattern, the master process is in charge of management, \ie, dispatching jobs to slave processes and collecting results. On the other hand, the slave processes repeatedly work on the jobs and send results back to the master process until receiving the message of termination. Figure~\ref{fig:example_master_slave} is an example program.
Following shows an example program.

\begin{figure}[!htb]
\begin{center}
{\footnotesize\begin{tabular}{l|l|l}
\hline
  $P_0$          & $P_1$            &     $P_2$ \\
  \hline
  \verb"Send(1);" & \textbf{while}(true) \textbf{do} \verb"{" & \textbf{while}(true) \textbf{do} \verb"{"\\
  \verb"Send(2);" &\verb"Recv(0);"  &\verb"Recv(0);" \\
  \textbf{while} (...) \textbf{do} \verb"{" &\textbf{if} (...) &\textbf{if} (...) \\
  \verb"Recv(*,r);" & \ \ \textsf{break}\verb";"& \ \ \textsf{break}\verb";"\\
  \verb"Send(r.src)" & \verb"Send(0);"&\verb"Send(0);"\\
  \verb"}" &\verb"}"&\verb"}"\\
  \verb"..."&&\\
  \hline
\end{tabular}}
\end{center}
\caption{An example program using master-slave pattern.}
\label{fig:example_master_slave}
\end{figure}

$P_0$ is the master process, and the remaining processes are slaves. $P_0$ first dispatches one job to each slave. Then, $P_0$ will iteratively receive a job result from any slave (\verb"Recv(*,r)") and dispatch a new job to the slave process whose result is just received, \ie, \verb"Send(r.src)", where \verb"r" is the status of the receive and \verb"r.src" denotes the process identifier of the received message. Each slave process iteratively receives a job (\verb"Recv(0)"), completes the job (omitted for brevity) and sends the job result to $P_0$ (\verb"Send(0)"). The \textbf{if} condition becomes true when the received message is for termination.
The total number of jobs is controlled by the \textbf{while} loop in $P_0$. After all the jobs have been completed, $P_0$ will notify all the slave processes to exit.
The communication behavior of master-slave pattern is highly dynamic, \ie, the destination of the job send operation depends on the matching of the wildcard receive in the master process. In principle, if there are $n$ dynamically dispatched jobs for $m$ slaves, there exists $m^n$ cases of dispatching jobs. \mpise\ supports master-slave pattern as follows.

\textbf{Recognition.} We recognize the master-slave pattern automatically during symbolic execution. More precisely, to recognize the master process, we check whether the process identifier of a received message of a wildcard receive ($\verb"Recv(*,r)"$) is used as the destination of a send operation, \ie, $\verb"Send(r.src)"$. We call such a wildcard receive \texttt{schedule receive}.
Then, we locate the corresponding slave processes \wrt the matchings of the \texttt{schedule receive}, \ie, a process is a slave if its message can match the \texttt{schedule receive}.

\textbf{Modeling the master process.} We allocate a global variable $label$ for each \texttt{schedule receive} to decide the destination process of the next job. In addition to the operations for modeling
normal wildcard receive operations, \ie, using external choice of the matched channel readings, we write the read value from the matched channel to $label$. For example, suppose $Chan_1$ and $Chan_2$ are the matched channels of a \texttt{schedule receive}, we model it by $Chan_1? label\rightarrow \verb"Skip"\square Chan_2 ?label\rightarrow \verb"Skip"$. Considering that a send operation in slave processes is modeled by writing the slave's process identifier to the channel, we can use the value of $label$ to decide the destination of the next job.

\textbf{Modeling the slave process.} We use \emph{recursive} CSP process to model a slave's dynamic feature, \ie, repeatedly receiving a job and sending the result back until receiving the termination message. To model the job receive operation in a slave process,
we use a guard expression $[label==i]$ before the channel reading of the receive operation, where $label$ is the global variable of the corresponding \texttt{schedule receive} and $i$ is the slave process's identifier. Notably, the guard expression will disable the channel reading until the inside condition becomes true, indicating that
the slave process cannot receive a new job unless its result has just been received by the master.

We need to refine the algorithms of symbolic execution and CSP modeling to support master-slave pattern. We have already implemented the refinement in MPI-SV. The
support of master-slave pattern demonstrates that \mpise\ outperforms the single path reasoning work~\cite{forejt2014precise,DBLP:conf/nfm/HuangM15}.

\subsection{Proof of CSP Modeling's Soundness and Completeness}\label{prove_path_correctness}

\noindent\textbf{Theorem 4.1} $\mathcal{F}(\emph{\textsf{CSP}}_{static}) =  \mathcal{F}(\emph{\textsf{CSP}}_{ideal})$.

\begin{proof}

We first prove $\mathcal{T}(\textsf{CSP}_{static}) =  \mathcal{T}(\textsf{CSP}_{ideal})$, based on which we can prove $\mathcal{F}(\emph{\textsf{CSP}}_{static}) =  \mathcal{F}(\emph{\textsf{CSP}}_{ideal})$.

First, we prove $\mathcal{T}(\textsf{CSP}_{static}) {\subseteq}  \mathcal{T}(\textsf{CSP}_{ideal})$ by contradiction. Suppose there exists a trace $t {=} \langle e_1, ...,  e_n\rangle$ such that $ t {\in} \mathcal{T}(\textsf{CSP}_{static})$ but $t {\notin} \mathcal{T}(\textsf{CSP}_{ideal})$. The only difference between $\textsf{CSP}_{static}$ and $\textsf{CSP}_{ideal}$ is that $\textsf{CSP}_{static}$ introduces more channel read operations during the modeling of receive operations. Hence, there must exist a read operation of an extra channel in $t$. Suppose the first extra read is $e_k {=} c_e?x$, where $1{\le} k {\le} n$. Therefore, $c_e$ \emph{cannot} be read in $\textsf{CSP}_{ideal}$ when the matching of the corresponding receive operation starts, but $c_e$ is not empty at $e_k$ in $\textsf{CSP}_{static}$. Despite of the size of $c_e$, there must exist a write operation $c_e!y$ in $\langle e_1, ...,  e_{k-1}\rangle$. Because $\langle e_1, ...,  e_{k-1}\rangle$ is also a valid trace in $\textsf{CSP}_{ideal}$, it means $c_e$ is not empty in $\textsf{CSP}_{ideal}$ at $e_k$, which contradicts with the assumption that $c_e$ cannot be read in $\textsf{CSP}_{ideal}$. Hence, $\mathcal{T}(\textsf{CSP}_{static}) \subseteq  \mathcal{T}(\textsf{CSP}_{ideal})$ holds.

Then, we prove $\mathcal{T}(\textsf{CSP}_{ideal}) {\subseteq}  \mathcal{T}(\textsf{CSP}_{static})$ also by contradiction. Suppose there exists a trace $t {=} \langle e_1, ...,  e_m\rangle$ that $ t {\in} \mathcal{T}(\textsf{CSP}_{ideal})$ but $t {\notin} \mathcal{T}(\textsf{CSP}_{static})$. Because $\textsf{SMO}(op_j, S)$ is a superset of the precise matching set of $op_j$, $t$ cannot be a terminated trace. So, $\textsf{CSP}_{ideal}$ blocks at $e_m$. Because $t \notin \mathcal{T}(\textsf{CSP}_{static})$, there must exist a channel read operation $c_m?x$ that is enabled at $e_m$ in $\textsf{CSP}_{static}$, \emph{i.e.}, $c_m$ is not empty. Hence, there must exist a write operation $c_m!y$ in $\langle e_1, ...,  e_{m-1}\rangle$. Because $\langle e_1, ...,  e_{m-1}\rangle$ is valid in both of $\textsf{CSP}_{static}$ and $\textsf{CSP}_{ideal}$, $c_m?x$ is also enabled at $e_m$ in $\textsf{CSP}_{ideal}$, which contradicts with the assumption that $\textsf{CSP}_{ideal}$ blocks. Hence, we can have {\small$\mathcal{T}(\textsf{CSP}_{ideal}) {\subseteq}  \mathcal{T}(\textsf{CSP}_{static})$}, and 
$\mathcal{T}(\mathsf{CSP}_{static}) {=}  \mathcal{T}(\mathsf{CSP}_{ideal})$ holds.

Next, we can prove $\mathcal{F}(\emph{\textsf{CSP}}_{static}) =  \mathcal{F}(\emph{\textsf{CSP}}_{ideal})$ in a similar way. 
Suppose there exists $(s, X)$ in $\textsf{CSP}_{static}$ but $(s, X) \notin \textsf{CSP}_{ideal}$. It means there exists an event \emph{e} in $X$ that is refused by $\textsf{CSP}_{static}$ at $s$, but enabled by $\textsf{CSP}_{ideal}$ at $s$. Because there is no internal choice in the CSP models, we have $s {\cdot} \langle e\rangle {\notin} \mathcal{T}(\textsf{CSP}_{static})$~\citep{roscoe1998theory} and  $s {\cdot} \langle e\rangle {\in} \mathcal{T}(\textsf{CSP}_{ideal})$, which conflicts with {\small$\mathcal{T}(\mathsf{CSP}_{static}) {=}  \mathcal{T}(\mathsf{CSP}_{ideal})$}. The contradiction of the case in which $(s, X) {\notin} \textsf{CSP}_{static}$ but $(s, X) {\in}$\\$ \textsf{CSP}_{ideal}$ can be proved similarly. 

Finally, $\mathcal{F}(\emph{\textsf{CSP}}_{static}) =  \mathcal{F}(\emph{\textsf{CSP}}_{ideal})$ is proved.
\end{proof}

\noindent\textbf{Theorem 4.2} 
$\emph{\textsf{CSP}}_{static}$ \emph{is consistent with the MPI semantics}.

\begin{proof}
If the global state of generating \textsf{CSP}$_{static}$ is $S_c$, then we can get an MPI program $\mathcal{MP}_p$ from the sequence set $\mathsf{Seq}(S_c)$, where each process $\mathsf{Proc}_i$ of $\mathcal{MP}_p$ is the sequential composition of the operations in $Seq_i$. Suppose the LTS model of $\mathcal{MP}_p$ is $M_p$, and the LTS after hiding all the $\issue{o}$ actions in $M_p$ is $\hat{M_p}$. Then, $\mathsf{CSP}_{static}$ is consistent with the MPI semantics iff
$\{(\mathsf{M}_t(s),\mathsf{M}_s(X))  \mid (s, X) \in$\\$\mathcal{F}(\textsf{CSP}_{static})\}$ is equal to $\{(T, X) {\mid} T {\in} \hat{M_p} {\wedge} X {\subseteq} \mathsf{M}_s(\Sigma) {\setminus} \enable{\mathsf{r}(T)}\}$, where $\Sigma$ is the event set of $\textsf{CSP}_{static}$, $\mathsf{M}_t(s)$ and $\mathsf{M}_s(X)$ maps the events in the sequence $t$ and the set $X$ to the corresponding actions in MPI semantics, respectively.
This can be shown by proving that Algorithms~\ref{alg:Modeling} with a precise $\mathsf{SMO}$ ensures all the \emph{completes-before} relations of MPI semantics (\cf semantic rules in Figure \ref{rules}). The relations between send operations and those between receive operations (including conditional completes-before relation) are ensured by $\textsf{Refine}(P, S)$. The communications of $send$ and $recv$ operations are modeled by CSP channel operations and process compositions. The requirements of  \verb"Wait" and \verb"Barrier" operations are modeled by the process compositions defined in Algorithm~\ref{alg:Modeling}. Hence, we can conclude that $\textsf{CSP}_{ideal}$ is consistent with the MPI semantics. Then, by Theorem \ref{thm:failure}, we can prove $\textsf{CSP}_{static}$ is consistent with the MPI semantics.
\end{proof}

\fi

\end{document}